\documentclass[submission,copyright,creativecommons]{eptcs}

\usepackage{iftex}

\ifpdf
  \usepackage{underscore}         % Only needed if you use pdflatex.
  \usepackage[T1]{fontenc}        % Recommended with pdflatex
\else
  \usepackage{breakurl}           % Not needed if you use pdflatex only.
\fi

\RequirePackage{amsmath}
\RequirePackage{amssymb}
\RequirePackage{amsthm}

\usepackage{stmaryrd}

\usepackage[all,2cell,cmtip]{xy}
\UseTwocells

\newtheorem{theorem}{Theorem}[section]

\theoremstyle{definition}

\newtheorem{definition}{Definition}[section]
\newtheorem{warning}{Warning}[section]

\usepackage{import}
\usepackage{tikz}
\usepackage{amsmath,amsfonts,amssymb}
\usepackage{graphicx}

% https://tex.stackexchange.com/questions/76273/multiple-pdfs-with-page-group-included-in-a-single-page-warning
\pdfsuppresswarningpagegroup=1

\newcommand\tombstone{\qed}
\newcommand{\from}{\leftarrow}

\newcommand{\Hom}{\mathrm{Hom}}

\newcommand{\tensor}{\otimes}
\newcommand{\Tr}{\mathrm{Tr}}

\newcommand{\Cat}{\mathcal{C}}
\newcommand{\Bicat}{\mathcal{B}}
\newcommand{\Mat}{\mathrm{Mat}}
\newcommand{\MCat}{\Mat(\Cat)}
\newcommand{\MatC}{\Mat(\Cat)}
\newcommand{\Item}{$(\ast)$\ }

% sets 
\newcommand{\Natural}{\mathbb{N}}

% categories

\newcommand{\FVec}{\mathbf{FVec}}

% bicategories

% ------- graphics --------------

\newcommand{\lifteq}{\raisebox{20pt}{$=$}\ }

% from biproduct.tex :
\newcommand{\UShow}[1]{\raisebox{-0.4\height}{\includegraphics[scale=0.8]{mat_c/#1.pdf}}}
\newcommand{\Show}[1]{\raisebox{-0.4\height}{\includegraphics[scale=1.0]{mat_c/#1.pdf}}}

\newcommand{\Axiom}[1]{{\tt(#1)}}

% ----------------------------------------------------------------------------
%
%

%\title{A User's Guide to Surface Diagrams for Bimonoidal Categories}
%\title{Categorified Graphical Linear Algebra}
\title{Categorified Path Calculus}

\author{Simon Burton
\institute{Quantinuum\\
Terrington House, 13-15 Hills Road, Cambridge CB2 1NL, United Kingdom}
\email{simon.burton@quantinuum.com}
}

\date{\today}
\begin{document}
\maketitle

\begin{abstract}
Path calculus, or 
graphical linear algebra, is a string diagram calculus for
the category of matrices over a base ring.
It is the usual string diagram calculus for a symmetric monoidal
category, where the monoidal product is the direct sum of matrices.
We categorify this story to 
develop a surface diagram calculus for
the bicategory of matrices over a base bimonoidal category.
This yields a surface diagram calculus for any bimonoidal category 
by restricting to diagrams for $1\times 1$ matrices.
We show how additional structure on the base category,
such as biproducts, duals and the dagger, adds structure to the 
resulting calculus.
Applied to categorical quantum mechanics this yields a
new graphical proof of the teleportation protocol.
\end{abstract}

\newcommand{\Comp}{\circ}

% ----------------------------------------------------------------------------
% ----------------------------------------------------------------------------
% ----------------------------------------------------------------------------

%\section{Introduction}
%\section{Matrices over a bimonoidal category}

\section{Introduction: unifying the old and new quantum}

%Heisenberg's matrix mechanics

What is an amplitude?
Feynman's account of quantum mechanics
epitomizes the particle physicist's conception of amplitudes:
``counting'' the paths from a source to a detector (\cite{Feynman1965} \S 3.1).
The rules for such path counting are 
(i) parallel paths add, and (ii) serial paths multiply:
$$
2 + 3 =
\raisebox{-0.4\height}{\includegraphics{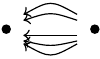}}
= 5,
\ \ \ \ 
2 \times 3 = 
\raisebox{-0.4\height}{\includegraphics{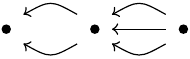}}
= 6.
$$
This is a cartesian, or classical, viewpoint on amplitudes,
and would correspond to a classical dynamical
system with uncertainty such as the Galton board,
if it wasn't for the Born rule. % and entanglement
\emph{Path calculus} or 
\emph{Graphical linear algebra} neatly encapsulates this
correspondence between linear algebra and path counting
as a \emph{bimonoid in a monoidal category}~\cite{Sobocinski2013,deFelice2022}.
%This is also known as the \emph{path calculus}~\cite{deFelice2022}.

%$$
%2+3 =
%\UShow{gla_235}
%=
%\UShow{gla_5}
%$$

Around the same time Feynman was giving his lectures,
John Bell was initiating the study of non-locality in quantum
physics, otherwise known as entanglement~\cite{Bell1964}.
This also has a categorical interpretation as
adjointness (duals) in a compact closed category~\cite{Abramsky2004}.
Perhaps this is why particle physicists (before Bell) 
didn't notice entanglement: infinite dimensional
Hilbert spaces don't have adjoints (duals).

There is a fundamental incompatibility here between the
(bi-)cartesian monoidal structure of the \emph{old quantum},
and the multiplicative (tensor) monoidal structure of the \emph{new quantum}.
In this work we resolve this conflict by \emph{categorification}
or \emph{2-linear algebra}~\cite{Kapranov1994,Baez1997}.

We consider this work as
part of 2-categorical quantum mechanics initiated by 
Vicary~\cite{Vicary2012}.

In section \ref{userguide} we present a tour of the graphical
calculus as applied to categorified matrices.
This is meant to be as concrete as possible.
We show how additional structure on the base category,
such as biproducts \ref{biproducts}, duals \ref{duals} and the dagger \ref{teleportation},
is represented in the resulting calculus.
Applied to categorical quantum mechanics this yields a
new graphical proof of the teleportation protocol \ref{teleportation}.
We also show how this graphical calculus relates to 
similar calculi in the literature \ref{relation}.
Finally, section \ref{bicategory} is a recapitulation of
the main ideas, from a more abstract perspective.

\section{A user's guide to categorified linear algebra}\label{userguide}

\subsection{Review of path calculus}\label{path-calculus}
%, or graphical linear algebra}

As a perceptual warmup, we review path calculus, or graphical linear algebra,
using conventions that fit with the development below.
At an elementary level, the story is about the coincidence
between path-counting and matrices with natural number entries.
A matrix with $m$ rows and $n$ columns corresponds to
a path diagram with $m$ outputs, and $n$ inputs.
We draw diagrams algebraically, from right to left.

\begin{center}
\begin{tabular}{|c|c|c|c|c|c|}
\hline
diagram & \Show{n_n} & \Show{nn_n} & \Show{n_nn} & \Show{nn} & \Show{swap} \\
\hline
matrix & 
$\begin{bmatrix}1\end{bmatrix}$ & 
$\begin{bmatrix}1\\1\end{bmatrix}$ & 
$\begin{bmatrix}1& 1\end{bmatrix}$ &
$\begin{bmatrix}1&0\\0&1\end{bmatrix}$ &
$\begin{bmatrix}0&1\\1&0\end{bmatrix}$ 
\\
\hline
\end{tabular}
\end{center}
This last diagram we call a \emph{swap}.

Horizontal composition of diagrams corresponds to matrix
multiplication:

\begin{center}
\begin{tabular}{c c c}
%\Show{nn_n_nn} & $=$ & \Show{nn_nn} \\
\Show{nn_n_nn} & & \\
\hspace{10pt}$\begin{bmatrix}1\\1\end{bmatrix}\begin{bmatrix}1& 1\end{bmatrix}$ & $=$ & $\begin{bmatrix}1&1\\1&1\end{bmatrix}$
\end{tabular}
\begin{tabular}{c}
\Show{n_nn_n} \\
\hspace{20pt}$ \begin{bmatrix}1& 1\end{bmatrix}\begin{bmatrix}1\\1\end{bmatrix} =\begin{bmatrix}2\end{bmatrix}$
\end{tabular}
\end{center}

We are drawing these diagrams skewed because we are saving the third dimension
for later. The coordinate system looks like this:
\begin{center}
\Show{coords}
\end{center}

Stacking diagrams into the page corresponds to direct sum of matrices:
\begin{center}
\begin{tabular}{r c r c}
 & \Show{nn} & & \Show{nn_nan_nn} \\
$\begin{bmatrix}1\end{bmatrix}\oplus \begin{bmatrix}1\end{bmatrix}=$\hspace{-10pt}
 & $\begin{bmatrix}1&0\\0&1\end{bmatrix}$ &   
$\begin{bmatrix}1\\1\end{bmatrix}\oplus \begin{bmatrix}1&1\end{bmatrix}=$\hspace{-10pt}
&
$\begin{bmatrix}1&0&0\\1&0&0\\0&1&1\end{bmatrix}$ 
\end{tabular}
\end{center}

Empty matrices come in different sizes, which we notate using subscripts:
%, which we can notate using diagrams:
\begin{center}
\begin{tabular}{l r r}
\Show{_n} & \Show{n_}\ \ \ \  & \\
$\begin{bmatrix}\ \end{bmatrix}_{0,1}$ &
$\begin{bmatrix}\ \end{bmatrix}_{1,0}$ & 
\ \ \ $\begin{bmatrix}\ \end{bmatrix}_{0,0}$ 
\end{tabular}
\end{center}
The matrix with zero rows and zero columns has the empty diagram.

Using these building blocks, we have equations which we render using the vertical
dimension:
\begin{center}
\Axiom{l-unit}\ \UShow{l-unit} \hspace{20pt}
\Axiom{r-unit}\ \UShow{r-unit} \hspace{20pt}
\Axiom{assoc}\ \UShow{assoc}   \hspace{20pt}
\Axiom{comm}\ \UShow{comm}
\end{center}
and the horizontal opposites:
%\vspace{5pt}
\begin{center}
\Axiom{l-counit}\hspace{0pt}\UShow{l-counit} \hspace{20pt}
\Axiom{r-counit}\hspace{0pt}\UShow{r-counit} \hspace{20pt}
\Axiom{coassoc}\hspace{0pt}\UShow{coassoc} \hspace{20pt}
\Axiom{cocomm}\hspace{0pt}\UShow{cocomm}
\end{center}
Finally the equations:
\begin{center}
\Axiom{bimonoid}\UShow{bialgebra}    \hspace{10pt}
\Axiom{comul-unit}\UShow{comul-unit} \hspace{10pt}
\Axiom{counit-mul}\UShow{counit-mul} \hspace{10pt}
\Axiom{counit-unit}\UShow{counit-unit} 
\end{center}
The punchline of graphical linear algebra is that using this equational
diagrammatic theory we can recover the theory of matrices with natural number entries.
(Strictly speaking, we also need isotopy equations for the swaps.)

We can go a bit further with this correspondence. 
Given any semi-ring, or rig, $(R,+,\times,0,1)$ we introduce the
diagrammatic generators:
$$
\Show{hom}
$$
for each $r\in R$.
These satisfy equations
\begin{center}
\Axiom{unit-hom} \UShow{unit-hom}   \hspace{10pt}
\Axiom{mul-hom} \UShow{mul-hom}   \hspace{10pt}
\Axiom{counit-hom} \UShow{counit-hom}   \hspace{10pt}
\Axiom{comul-hom} \UShow{comul-hom}
\end{center}

\begin{center}
\Axiom{add} \UShow{add}   \hspace{20pt}
\Axiom{zero} \UShow{zero}   \hspace{20pt}
\Axiom{mul} \UShow{mul}   \hspace{20pt}
\Axiom{one} \UShow{one}
\end{center}
as well as equations for interacting with the swaps.
%\todo{generalize to diagrams for arbitrary semi-ring? mul/comul homs}

As a first step in our categorification journey we notice that
the natural numbers, as a semi-ring or rig:
$$
    (\Natural, +, \times, 0, 1)
$$
can be replaced, or \emph{categorified}, by the category
of finite dimensional vector spaces over a field, $\FVec$.
This category has enough structure to mimic the
addition and multiplication of natural numbers:
$$
    (\FVec, \oplus, \otimes, O, I),
$$
where $O$ is a fixed zero dimensional vector space, and $I$ is a fixed
one dimensional vector space.
The equational axioms above are replaced by specific isomorphisms,
which then satisfy further equations.
This leads to the theory of Kapranov-Voevodksy 2-vector spaces.
However, in the following section we will proceed in more generality,
replacing $\FVec$ with an arbitrary \emph{bimonoidal category} $\Cat$.

%These are tensor networks where the ``tensor'' behaves like a
%direct sum.
%Recall that \emph{graphical linear algebra}\cite{Sobocinski2013} is
%the string diagram calculus \cite{Selinger2010},
%for a monoidal category where the monoidal product is
%cartesian and co-cartesian.

%\end{document}

% ----------------------------------------------------------------------------
% ----------------------------------------------------------------------------
% ----------------------------------------------------------------------------

\subsection{Categorified linear algebra}

Given a semi-ring or rig, $(R,+,\times,0,1)$
we can form the \emph{category of matrices} $\Mat(R)$
over this rig, whose objects are natural numbers,
morphisms $m\leftarrow n$ are $m\times n$ matrices with entries in $R$
and composition is given by matrix product.
We now wish to replace the rig $R$ with a category $\Cat$ that has
enough structure to perform rig-like operations on the objects of $\Cat$.

A \emph{bimonoidal category} is a category $\Cat$
equipped with an \emph{additive} symmetric monoidal structure,
$$(\Cat, \oplus, O, \alpha^\oplus, \lambda^\oplus, \rho^\oplus, \sigma^\oplus)$$
and a \emph{multiplicative} monoidal structure,
$$(\Cat, \tensor, I, \alpha^\tensor, \lambda^\tensor, \rho^\tensor)$$
such that $\tensor$ \emph{distributes} over $\oplus$
via natural isomorphisms:
\begin{align*}
    \delta^l_{A,B,C} &: A\tensor(B\oplus C) \to (A\tensor B)\oplus(A\tensor C) \\
    \delta^r_{A,B,C} &: (A\oplus B)\tensor C \to (A\tensor C)\oplus(B\tensor C) \\
    \lambda_A &: O\tensor A \to O \\
    \rho_A &: A\tensor O \to O 
\end{align*}
called respectively the left- and right-\emph{distributors}, and
the left- and right-\emph{nullitors}.
All these structures satisfy coherence equations~\cite{Laplaza1971,Kelly1974}.
We denote $\Cat_0$ as the set of objects of $\Cat$, and $\Cat_1$ as
the set of morphisms of $\Cat$.

We now consider the bicategory $\MatC$ of matrices over a bimonoidal category $\Cat$.
Important examples are, $\Cat=\{\mathtt{true}, \mathtt{false}\}$
the poset of truth values considered as a bimonoidal category
($\oplus$ and $\tensor$ are disjunction and conjunction), in which
case $\MatC$ is the bicategory of finite sets and relations,
and $\Cat=\mathrm{FdVec}_\mathbb{K}$ the bimonoidal category of finite dimensional vector spaces over
a field $\mathbb{K}$. This case gives $\MatC$ as the bicategory of
Kapranov-Voevodksy 2-vector spaces~\cite{Kapranov1994}.

In this section we dive into example calculations in $\MatC$, learning on-the-job.
See section \ref{bicategory} for a more formal treatment, and 
the reference~\cite{Johnson2021} chapter 8, for an exquisitely detailed
definition of $\MatC$.

%As a bicategory, $\MatC$ has objects, or \emph{0-cells}, morphisms or \emph{1-cells}
%and 2-morphisms, or \emph{2-cells}.
As a bicategory, $\MatC$ has objects the natural numbers $\Natural=\{0,1,2...\}$, 
morphisms $m\leftarrow n$ are $m\times n$ matrices of objects of $\Cat$,
and 2-morphisms 
$ \xymatrix{ m & \ltwocell{} n }$
are $m\times n$ matrices of morphisms of $\Cat$.
We call these the \emph{0-cells}, \emph{1-cells} and \emph{2-cells}
respectively.
We render $n\times m$ matrices in $\MatC$ with $m$ incoming
surfaces and $n$ outgoing surfaces. Here are some 1-cells built
using the object $I\in\Cat_0$:
\begin{center}
\begin{tabular}{c c c}
\raisebox{-0.4\height}{\ \includegraphics{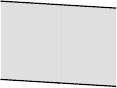}\ }
&
\raisebox{-0.4\height}{\ \includegraphics{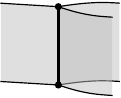}\ }
&
\raisebox{-0.4\height}{\ \includegraphics{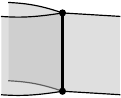}\ }
\\
$[I]$
&
$[I\ \ I]$
&
$\begin{bmatrix}I\\I\end{bmatrix}$
\end{tabular}
\end{center}
Where the anonymous grey surface is the 0-cell $1\in\Natural$.
These surface diagrams flow from right-to-left,
which matches the usual algebraic notation.
Horizontal composition is categorified matrix multiplication: 
instead of using  $+,\times$ of a rig $R$, we use 
the bimonoidal structure $\oplus,\tensor$ of $\Cat$.
Here we show the horizontal composition of some 1-cells:

\begin{center}
\begin{tabular}{l l}
\raisebox{-0.4\height}{\ \includegraphics{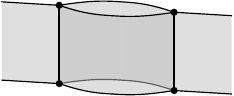}\ }
&
\raisebox{-0.4\height}{\ \includegraphics{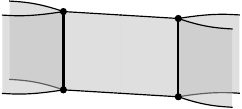}\ }
\\
$
[I\ \ I] \begin{bmatrix}I\\I\end{bmatrix} = [I\tensor I \oplus I\tensor I]
\cong [I\oplus I]
$
&
$
\begin{bmatrix}I\\I\end{bmatrix} [I\ \ I] = 
\begin{bmatrix}I\tensor I & I\tensor I \\ I\tensor I & I\tensor I\end{bmatrix}
\cong 
\begin{bmatrix}I & I \\ I & I\end{bmatrix}
$
\end{tabular}
\end{center}

We render the direct sum of matrices 
by layering.  Given objects $A, B \in \Cat_0$, 
%\begin{center}
%\begin{tabular}{c c}
%%py under(A, "$[A]$") @ under(B, "$[B]$")
%&
%%py mul.i << (under(A, "$[A]$") @ under(B, "$[B]$")) << comul.i
%\\
%$ [A] \boxplus [B] := \begin{bmatrix}A&O\\O&B\end{bmatrix} $
%&
%$[I\ \ I] \begin{bmatrix}A&O\\O&B\end{bmatrix} \begin{bmatrix}I\\I\end{bmatrix} \cong [A\oplus B]$
%\end{tabular}
%\end{center}
\begin{center}
\begin{tabular}{c}
\raisebox{-0.4\height}{\ \includegraphics{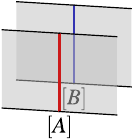}\ }
\\
$ [A] \boxplus [B] := \begin{bmatrix}A&O\\O&B\end{bmatrix} $
\end{tabular}
\end{center}
%Using horizontal composition we recover the internal additive monoidal of $\Cat$.

The 2-cells of $\MatC$ are rendered vertically in the upwards direction.
We horizontally compose 2-cells in the same way as 1-cells:
using the monoidal structure $\tensor,\oplus$ of $\Cat$.
For example, given morphisms $f:A\to B$ and $g:C\to D$ in 
$\Cat$, we can build $1\times 1$ matrix 2-cells $[f]:[A]\to[B]$
and $[g]:[C]\to[D]$, forming the horizontal composition and direct sum,
\begin{center}
$[f][g] =$
\raisebox{-0.4\height}{\ \includegraphics{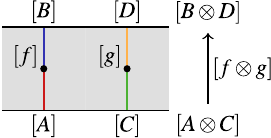}\ }
\hspace{20pt}
$[f]\boxplus[g] =$
\raisebox{-0.4\height}{\ \includegraphics{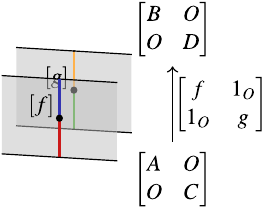}\ }
\end{center}
On the left we see that the grey surface supports the usual
string diagram calculus for the monoidal category
$(\Cat, \tensor, I, \alpha^\tensor, \lambda^\tensor, \rho^\tensor)$;
all we do is forget the distinction between a $1\times 1$
matrix 1-cell or 2-cell and its unique entry.

The $0\times 0$ matrix $[\ ]_{0,0}$ has no entries and corresponds
to the empty surface diagram.
This is a strict unit for the direct sum: 
$M\boxplus [\ ]_{0,0} = [\ ]_{0,0}\boxplus M = M$.
The $1\times 0$ matrix $[\ ]_{1,0}$ also has no entries, but 
participates in the direct sum by inserting
a zero row, for example $[A]\boxplus [\ ]_{1,0} = \begin{bmatrix}A\\O\end{bmatrix}$.
\begin{center}
\begin{tabular}{c l}
\raisebox{-0.4\height}{\ \includegraphics{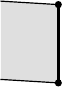}\ }
&
$
\hspace{35pt}
\raisebox{-0.4\height}{\ \includegraphics{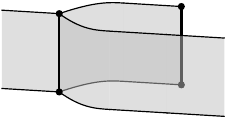}\ }
\cong
\raisebox{-0.4\height}{\ \includegraphics{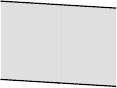}\ }
$
\\
\hspace{30pt}$[\ ]_{1,0}$
&
\hspace{35pt}
$[I\ \ I] \begin{bmatrix}I\\O\end{bmatrix} = [I\tensor I \oplus I\tensor O] \cong [I]$
\end{tabular}
\end{center}
We are being strict about the distinction between
equality on-the-nose ``$=$'' and isomorphism ``$\cong$''.
These isomorphisms are 2-cells, and are built by composing morphisms
in $\Cat$ that come from the bimonoidal structure of $\Cat$.
%such as $\lambda^\tensor$ and $\rho^\oplus$.
In fact, there will be an isomorphism 2-cell corresponding
to each of the equations in the presentation 
of the path calculus above \ref{path-calculus}.
The isomorphism 2-cells for {\tt(r-unit)} and {\tt(mul-hom)} are
rendered as
%We will render these as anonymous 2-cells, such as
%and their inverses as
%the vertical reflection:
$$
\raisebox{-0.4\height}{\ \includegraphics{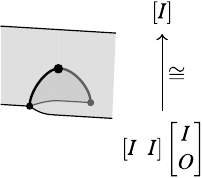}\ }
%%py vertical(r_unit, "$[I]$", r"$[I\tensor I\oplus I\tensor O]$", r"$\cong$")
\hspace{30pt}
\raisebox{-0.4\height}{\ \includegraphics{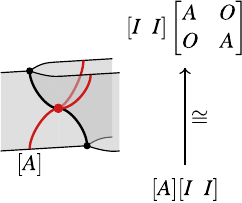}\ }
$$
respectively.
These are examples of \emph{coning}: 
forming a higher dimensional cone over two given string diagrams.
Being isomorphisms, these 2-cells obey equations.
For {\tt(r-unit)} we have
$$
\raisebox{-0.4\height}{\ \includegraphics{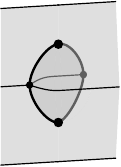}\ }
=
\raisebox{-0.4\height}{\ \includegraphics{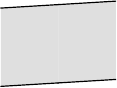}\ }
\hspace{30pt}
\raisebox{-0.4\height}{\ \includegraphics{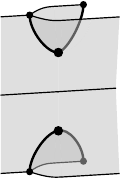}\ }
=
\raisebox{-0.4\height}{\ \includegraphics{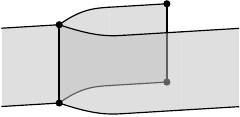}\ }
$$
The inverse isomorphism is rendered as the
vertically opposite diagram.
The isomorphism equations for {\tt(mul-hom)} 
we call
\emph{pulling} a string $[A]$ onto a pair of surfaces:
$$
\raisebox{-0.4\height}{\ \includegraphics{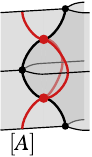}\ }
=
\raisebox{-0.4\height}{\ \includegraphics{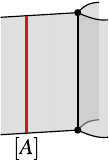}\ }
\hspace{20pt}
\raisebox{-0.4\height}{\ \includegraphics{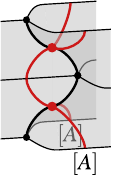}\ }
=
\raisebox{-0.4\height}{\ \includegraphics{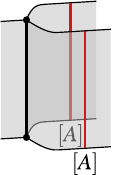}\ }
$$
%The isomorphism and naturality equations for pulling are rendered as
And this isomorphism is 2-natural:
$$
\raisebox{-0.4\height}{\ \includegraphics{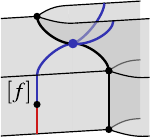}\ }
=
\raisebox{-0.4\height}{\ \includegraphics{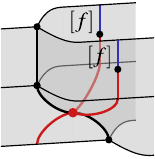}\ }
$$
Similar pulling equations hold for the horizontal opposite.
%These equations are fundamental to the 2-categorical quantum
%mechanics below: it allows us to pull a quantum system
%onto classical sheets.
%This is fundamental to the teleportation protocol below,..

The horizontal composition of 1-cells given by
\begin{center}
\raisebox{-0.4\height}{\ \includegraphics{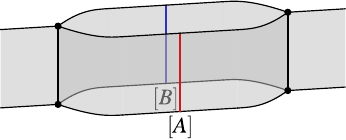}\ }
\\
$[I\ \ I] \begin{bmatrix}A&O\\O&B\end{bmatrix} \begin{bmatrix}I\\I\end{bmatrix}$
\end{center}
%Using horizontal composition we recover the internal additive monoidal of $\Cat$.
is isomorphic to the 1-cell $[A\oplus B]$.
Once again, we record this isomorphism as a 2-cell, which
is the categorified path calculus rule {\tt(add)}:
%%%py config.pos = "northeast"
%Another important categorified path calculus rule comes from
%{\tt(add)}:
$$
\raisebox{-0.4\height}{\ \includegraphics{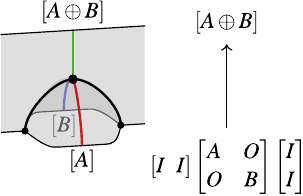}\ }
$$

Notice we have recovered the ``internal sum'' $\oplus$ of $\Cat$ by performing 
``external sum'' $\boxplus$ and horizontal compositions in $\MatC$.
This is the key idea behind this calculus: we have enlarged the domain of
definition to $\MatC$, a form of categorification, but then looking down from
above, we see certain expressions evaluate to $\Cat$ itself.
These are the microcosms discussed further in Section\ref{microcosms}.

%\end{document}

%Similarly,
%we can also pull strings onto layers:
%$$
%%py config.pos = "northwest"
%%py src = A.i << mul.i
%%py tgt = mul.i << (A.i @ A.i)
%%py cell = Cell2(tgt.tgt, src.src, pip_color=A.stroke)
%%py vertical(cell, r"$[I\ \ I][A]$", r"$[A][I\ \ I]$")
%$$
%This is an isomorphism:
%$$
%%py src = _A << mul.i
%%py tgt = mul.i << (A.i @ A.i)
%%py ld = Cell2(tgt.tgt, src.src, pip_color=A.stroke)
%%py l_dist_i(A, m, A.stroke) * ld
%=
%%py _A << mul.i
%$$
%and is natural for $f:A\to B$ in $\Cat$,
%$$
%%py l_dist(B, m, B.stroke) * (_f << mul.i)
%=
%%py (mul.i << (_f@_f))*l_dist(A, m, A.stroke)
%$$

%\end{document}

% ----------------------------------------------------------------------------
% ----------------------------------------------------------------------------
% ----------------------------------------------------------------------------
\subsection{The symmetric monoidal structure}

%The direct sum 2-functor $\boxplus:\MatC\times\MatC$ makes
%$\MatC$ into a monoidal bicategory (\ref{symmetric-monoidal}).
%The 1-cell 
%$\begin{bmatrix}O&I\\I&O\end{bmatrix}$
%swaps surfaces around, and we use it to define the symmetry
%2-natural transform of $\boxplus:$
%%The symmetry 
%%swaps layers (0-cells) around using permutation matrices,
%%and is natural on 1-cells and 2-cells. For example
%\begin{tabular}{c c c }
%$
%%py figure(swap(m, m), r"$\begin{bmatrix}O&I\\I&O\end{bmatrix}$")
%$
%&
%$
%%py src = (A@B) << swap(m, m)
%%py tgt = swap(m, m).i << (B_ @ A_)
%%py cell = Cell2(tgt.tgt, src, pip_color=None)
%%py cell
%$
%&
%$
%%%py s = r"$\begin{bmatrix}A&O\\O&B\end{bmatrix}\begin{bmatrix}O&I\\I&O\end{bmatrix}$"
%%%py t = r"$\begin{bmatrix}O&I\\I&O\end{bmatrix}\begin{bmatrix}B&O\\O&A\end{bmatrix}$"
%%%py vertical(cell, t, s, r"$\cong$", size=1.0, shrink=0.2)
%%py src = (B@D) << swap(m, m)
%%py tgt = swap(m, m) << (D@B)
%%py top = Cell2(tgt, src, pip_color=None)
%%py src = (A@C) << swap(m, m)
%%py tgt = swap(m, m) << (C@A)
%%py bot = Cell2(tgt, src, pip_color=None)
%%py top * ((_f @ _g1) << swap(m,m).i)
%=
%%py (swap(m,m).i << (_g1 @ _f)) * bot
%$
%\\
%The swap 1-cell.
%&
%The naturality 2-cell.
%&
%The 2-naturality equation.
%\end{tabular}

The direct sum 2-functor $\boxplus:\MatC\times\MatC$ is a biproduct, and makes
$\MatC$ into a symmetric monoidal bicategory (section~\ref{symmetric-monoidal}).
The symmetry 
swaps surfaces (0-cells) around using permutation matrices.
We render this using a dotted line:
$$
\raisebox{-0.4\height}{\ \includegraphics{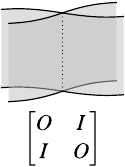}\ }
$$
This symmetry is natural on 1-cells and 2-cells. 
%For example, we have the following 1- and 2-cells:
The naturality on 1-cells is exhibited by 2-cells:
$$
\raisebox{-0.4\height}{\ \includegraphics{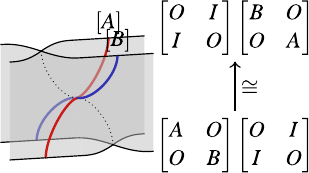}\ }
$$
and 2-naturality is the following
equation between 2-cells:
$$
\raisebox{-0.4\height}{\ \includegraphics{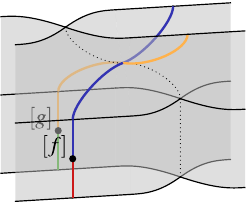}\ }
=
\raisebox{-0.4\height}{\ \includegraphics{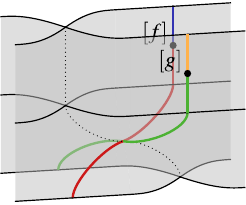}\ }
$$

Corresponding to the path calculus rule {\tt(comm)} 
we have isomorphism 2-cell
$$
\raisebox{-0.4\height}{\ \includegraphics{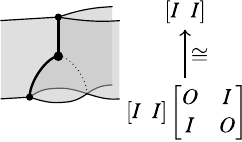}\ }
$$
and the horizontal opposite for {\tt(cocomm)}.
These are built from the unitors and nullitors of $\Cat$.

% ----------------------------------------------------------------------------
% ----------------------------------------------------------------------------
% ----------------------------------------------------------------------------

\subsection{The horizontal associator}\label{horizontal-associator}

So far we haven't
used the additive symmetry $\sigma^\oplus$, or
the left or right distributors $\delta^l$ and $\delta^r$ of $\Cat$.
These show up when we need associativity of the horizontal
composition (Definition~\ref{bicategory-def}).
%Associativity of matrix multiplication relies on 
Consider the following diagram,
$$
\raisebox{-0.4\height}{\ \includegraphics{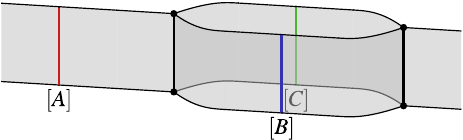}\ }
$$
We have many ways of horizontally composing this, and these
are related by \emph{horizontal associator} isomorphisms,
built using the bimonoidal structure of $\Cat$.
For example,
we use the left-distributor to exhibit the associator isomorphism:
\begin{align*}
[A]\Bigl([B\ \ C] \begin{bmatrix}I\\I\end{bmatrix}\Bigr) 
&\cong 
\Bigl( [A][B\ \  C] \Bigr)\begin{bmatrix}I\\I\end{bmatrix}, \\
[A\tensor(B\oplus C)] &\xrightarrow{[\delta^l_{A,B,C}]} [A\tensor B \oplus A\tensor C].
\end{align*}
To see where we need the additive symmetry, 
consider the horizontal composition of 1-cells:
$$
\raisebox{-0.4\height}{\ \includegraphics{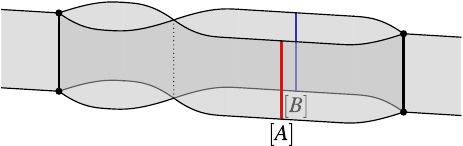}\ }
$$
%We evaluate this as the matrix multiplication
$$
\begin{bmatrix}I&I\end{bmatrix}
\Biggl(
\begin{bmatrix}O&I\\I&O\end{bmatrix}
\begin{bmatrix}A&O\\O&B\end{bmatrix}
\Biggr)
\begin{bmatrix}I\\I\end{bmatrix}
\cong
\begin{bmatrix}I&I\end{bmatrix}
\begin{bmatrix}O&B\\A&O\end{bmatrix}
\begin{bmatrix}I\\I\end{bmatrix}.
$$
This expression now has two ways to be evaluated,
related by the additive symmetry of $\Cat$:
\begin{align*}
\Biggl(
\begin{bmatrix}I&I\end{bmatrix}
\begin{bmatrix}O&B\\A&O\end{bmatrix}
\Biggr)
\begin{bmatrix}I\\I\end{bmatrix}
\cong
\begin{bmatrix}A&B\end{bmatrix}
\begin{bmatrix}I\\I\end{bmatrix}
\cong
[A\oplus B],
\\
\begin{bmatrix}I&I\end{bmatrix}
\Biggl(
\begin{bmatrix}O&B\\A&O\end{bmatrix}
\begin{bmatrix}I\\I\end{bmatrix}
\Biggr)
\cong
\begin{bmatrix}I&I\end{bmatrix}
\begin{bmatrix}B\\A\end{bmatrix}
\cong
[B\oplus A].
\end{align*}
Note this isomorphism $[A\oplus B]\cong[B\oplus A]$
is an isomorphism 2-cell of $\MatC$.

% ----------------------------------------------------------------------------
% ----------------------------------------------------------------------------
% ----------------------------------------------------------------------------

\subsection{Additive biproducts in the base category}\label{biproducts}

When the additive structure in the base category $\Cat$ 
is a biproduct, we acquire a matrix calculus for morphisms of $\Cat$.
We use round brackets for this, such as the morphism
$
\begin{pmatrix}1\ \ 1\end{pmatrix} : I\oplus I \to I.
$
Promoting this to a $1\times 1$ matrix
in our surface diagram calculus yields the cap and cup:
$$
\raisebox{-0.4\height}{\ \includegraphics{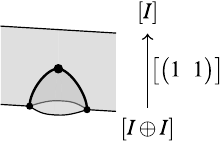}\ }
\hspace{35pt}
\raisebox{-0.4\height}{\ \includegraphics{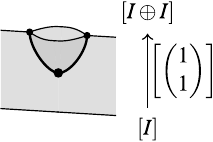}\ }
$$
Strictly speaking, we should be writing $[I\tensor I\oplus I\tensor I]$
instead of just $[I\oplus I]$. 
However, for the sake of clarity we
make this notational simplification.

Using this cap and cup we can define a \emph{trace}
which counts the number of layers, such as
$$
\mathrm{Tr}\Bigl( 
\raisebox{-0.4\height}{\ \includegraphics{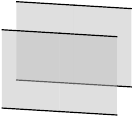}\ }
\Bigr) := 
\raisebox{-0.4\height}{\ \includegraphics{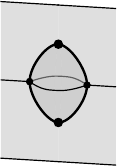}\ }
= 
\begin{bmatrix}\begin{pmatrix}1+1\end{pmatrix}\end{bmatrix}.
$$
We also have another cap and cup
$$
\raisebox{-0.4\height}{\ \includegraphics{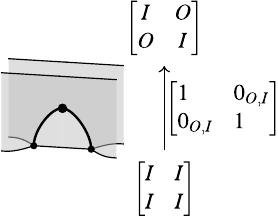}\ }
\hspace{35pt}
\raisebox{-0.4\height}{\ \includegraphics{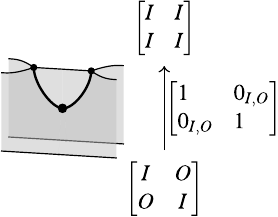}\ }
$$
that uses the zero maps $0_{O,I}:I\from O$ and $0_{I,O}:O\from I$ in $\Cat$.
All these caps and cups exhibit 
$$
\raisebox{-0.4\height}{\ \includegraphics{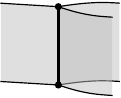}\ }
,\ \ \ 
\raisebox{-0.4\height}{\ \includegraphics{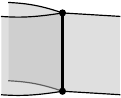}\ }
$$
as ambidextrous adjoints:
\begin{align*}
\raisebox{-0.4\height}{\ \includegraphics{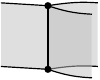}\ }
=
\raisebox{-0.4\height}{\ \includegraphics{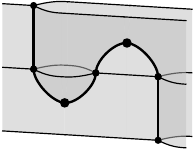}\ }
&,\ \ \ \ 
\raisebox{-0.4\height}{\ \includegraphics{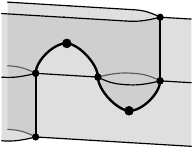}\ }
=
\raisebox{-0.4\height}{\ \includegraphics{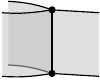}\ }
,\\
\raisebox{-0.4\height}{\ \includegraphics{mat_c.imcache/c13e979873cb69df12bf0a9c4081b0afe22480f0.pdf}\ }
=
\raisebox{-0.4\height}{\ \includegraphics{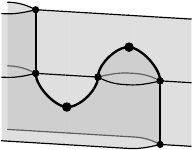}\ }
&,\ \ \ \ 
\raisebox{-0.4\height}{\ \includegraphics{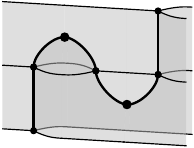}\ }
=
\raisebox{-0.4\height}{\ \includegraphics{mat_c.imcache/acd2b03855dbfdb7e430918de44b24b911e39114.pdf}\ }
.
\end{align*}
and we therefore have a \emph{Frobenius algebra} structure on $I\oplus I$ (See \cite{Lauda2006}).
It is straightforward to verify this is also also a special symmetric Frobenius
algebra, or a \emph{classical bit}~\cite{Vicary2012-1}.

%Just like how we crushed the 2-cell ...
We also have morphisms such as 
$
\begin{pmatrix}1_A\ \ 1_A\end{pmatrix} : A\oplus A \to A
$
in $\Cat$. Once again,
promoting this to a $1\times 1$ matrix
in our surface diagram calculus yields the 2-cells:
$$
\raisebox{-0.4\height}{\ \includegraphics{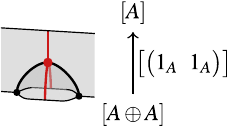}\ }
\hspace{35pt}
\raisebox{-0.4\height}{\ \includegraphics{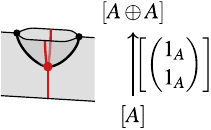}\ }
$$
Given morphisms $f,g : A\to B$, we can build the expression 
$$
[(f+g)] = 
\raisebox{-0.4\height}{\ \includegraphics{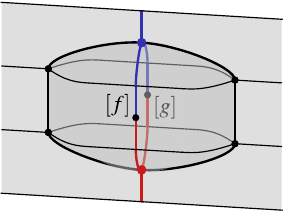}\ }
$$

% ----------------------------------------------------------------------------
% ----------------------------------------------------------------------------
% ----------------------------------------------------------------------------

\subsection{Duals in the base category}\label{duals}

%We now assume $\Cat$ has multiplicative symmetry 
%$\sigma_{A,B}^\tensor:A\tensor B\to B\tensor A$
%and duals.

When $\Cat$ has additive biproducts and duals (see~\cite{Heunen2019} \S 3.3) then $\MCat$
acquires all adjoints, given by the \emph{dual transpose} of matrices.
For example, given objects $A,B,C,D\in\Cat_0$
with duals $A^*,B^*,C^*,D^*\in\Cat_0$, exhibited by the counits and units
$$
\begin{array}{c c}
\epsilon_A:A\tensor A^*\to I, & \eta_A:I\to A^*\tensor A,\\
\epsilon_B:B\tensor B^*\to I, & \eta_B:I\to B^*\tensor B,\\
\epsilon_C:C\tensor C^*\to I, & \eta_C:I\to C^*\tensor C,\\
\epsilon_D:D\tensor D^*\to I, & \eta_D:I\to D^*\tensor D,\\
\end{array}
$$
we have the adjunction
$$
\begin{bmatrix}A&B\\C&D\end{bmatrix}
\dashv
\begin{bmatrix}A^*&C^*\\B^*&D^*\end{bmatrix}
$$
in $\MatC$, with counit given by
$$
\raisebox{-0.4\height}{\ \includegraphics{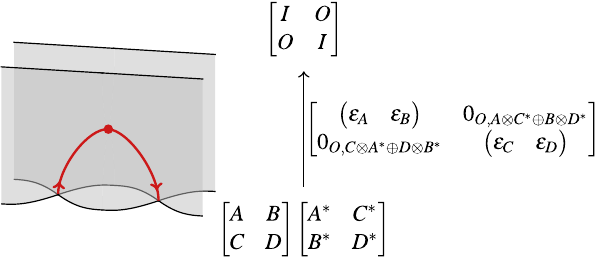}\ }
$$
and unit given by
$$
\raisebox{-0.4\height}{\ \includegraphics{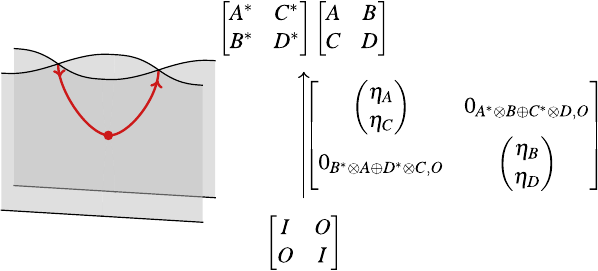}\ }
$$
%$$
%\begin{bmatrix} \begin{pmatrix}\epsilon_A & \epsilon_B\end{pmatrix}&0_{O,A\tensor C^*\oplus B\tensor D^*}\\ 0_{O,C\tensor A^*\oplus D\tensor B^*}&\begin{pmatrix}\epsilon_C & \epsilon_D\end{pmatrix} \end{bmatrix}
%\ \ \mathrm{and}\ \ 
%\begin{bmatrix} \begin{pmatrix}\eta_A\\ \eta_C\end{pmatrix}&0_{A^*\tensor B\oplus C^*\tensor D,O}\\ 0_{B^*\tensor A\oplus D^*\tensor C,O}&\begin{pmatrix}\eta_B\\ \eta_D\end{pmatrix} \end{bmatrix}
%$$
%respectively.
The snake equations are rendered as
$$
\raisebox{-0.4\height}{\ \includegraphics{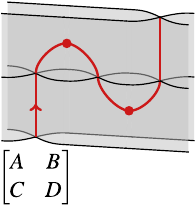}\ }
\lifteq
\raisebox{-0.4\height}{\ \includegraphics{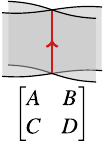}\ }
,\ \ \ \ \ \ \ \ \ \ 
\raisebox{-0.4\height}{\ \includegraphics{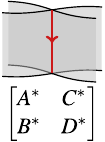}\ }
\lifteq
\raisebox{-0.4\height}{\ \includegraphics{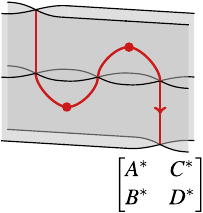}\ }
$$

%\end{document}

% ----------------------------------------------------------------------------
% ----------------------------------------------------------------------------
% ----------------------------------------------------------------------------

\subsection{Traces}

When $\Cat$ has multiplicative symmetry, additive biproducts, and duals
then we can build traces in $\MCat$.
Recall that the trace of an object $A\in\Cat$
is the composite
$$
\mathrm{Tr}(A) :=
I \xrightarrow{\eta_A} A^*\tensor A 
\xrightarrow{\sigma^\tensor}
A\tensor A^* 
\xrightarrow{\epsilon_A} I
$$
which is rendered as
$$
\mathrm{Tr}\Biggl(
\raisebox{-0.4\height}{\ \includegraphics{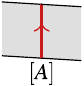}\ }
\Biggr) :=
\raisebox{-0.4\height}{\ \includegraphics{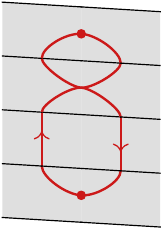}\ }
.$$

%\end{document}

We could try to extend this diagram to arbitrary $m\times n$
matrices over $\Cat$, but we immediately run into a problem
with the symmetry:
matrix products do not commute in general. 
Already the products
$$
[A\ \ B] \begin{bmatrix}C\\D\end{bmatrix} = [A\tensor C\oplus B\tensor D],\ \ 
\begin{bmatrix}C\\D\end{bmatrix} [A\ \ B] = 
\begin{bmatrix}C\tensor A & C\tensor B \\ D\tensor A & D\tensor B \end{bmatrix}
$$
live on a different number of layers so we cannot
hope to swap them using a 2-cell isomorphism.~\footnote{
Another solution is to use a different graphical calculus~\cite{Ponto2013}}
The trick here is that 
something like this does work ``under the cap''.
To demonstrate this we prove the identity
$\Tr(A\oplus B) = \Tr(A)+\Tr(B):$

\begin{align*}
%&\!\!\!\!\!\!\!\!\!\!\!\!\!\!\!\!\!\!\Tr\Biggl(
\Tr\Biggl(
\raisebox{-0.4\height}{\ \includegraphics{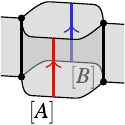}\ }
\Biggr)
&\ \ \ =
\raisebox{-0.4\height}{\ \includegraphics{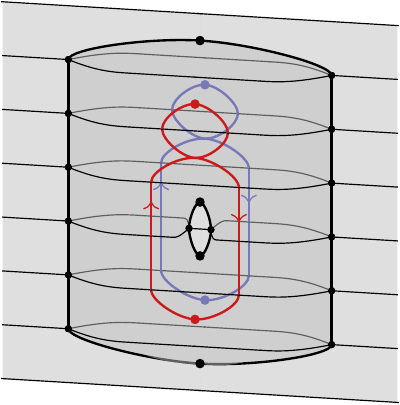}\ }
=
\raisebox{-0.4\height}{\ \includegraphics{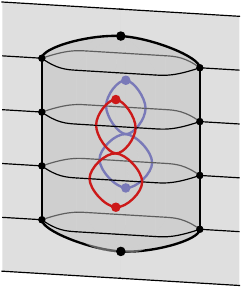}\ }
\\
&\ \ \ =\Tr\Biggl(
\raisebox{-0.4\height}{\ \includegraphics{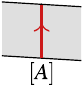}\ }
\Biggr) + \Tr\Biggl(
\raisebox{-0.4\height}{\ \includegraphics{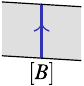}\ }
\Biggr).
\end{align*}

% ----------------------------------------------------------------------------
% ----------------------------------------------------------------------------
% ----------------------------------------------------------------------------

\subsection{2-categorical quantum mechanics}\label{teleportation}

%% https://ncatlab.org/nlab/files/SelingerPositiveMaps.pdf
%When the category $\Cat$ is dagger,
%we can apply the CPM construction~\cite{Selinger2007},
%then 
%$\Mat(\mathbf{CPM}(\Cat)$ 

In this section we let $\Cat$ be compact closed with biproducts
and a compatible dagger structure~\cite{Abramsky2004}.
%When $\Cat$ is a dagger bimonoidal category, then
In this case 
$\MatC$ inherits a dagger structure (section~\ref{dagger}).
%we can put a dagger structure on $\MatC$, see 
%This is a contravariant 2-functor $\dag:\MatC\to\MatC$
This dagger is the identity on 0-cells and 1-cells, 
contravariant on 2-cells, and squares to the identity.
For example,
given $f:A\to B$ and $g:B\to D$ in $\Cat$:
$$
\Biggl(
\raisebox{-0.4\height}{\ \includegraphics{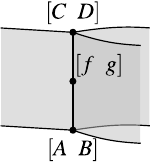}\ }
\Biggr)^\dag
:=
\raisebox{-0.4\height}{\ \includegraphics{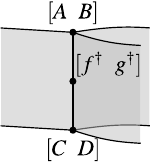}\ }
.$$

A \emph{unitary} 2-cell $\mu$ in $\MatC$
is an isomorphism whose inverse is $\mu^\dag$.
Following \cite{Vicary2012-1}, 
%we define a 2-outcome \emph{measurement}
%on $A\in\Cat_0$
%as a unitary 2-cell $[A] \to [I\oplus I]:$
we define the \emph{qubit measurement, qubit preparation} and 
\emph{projective measurement}
on $A\in\Cat_0$, as unitary 2-cells
of the form:
\begin{center}
\begin{tabular}{ c c c }
\ \ \ \ \raisebox{-0.4\height}{\ \includegraphics{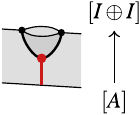}\ }
&
\ \ \ \ \raisebox{-0.4\height}{\ \includegraphics{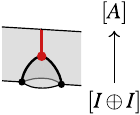}\ }
&
\ \ \ \ \raisebox{-0.4\height}{\ \includegraphics{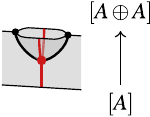}\ }
\\
qubit measurement & qubit preparation & projective measurement \\
\end{tabular}
\end{center}
A similar definition holds for $n$-dimensional systems, here we have
rendered the case $n=2$. 
%When $\Cat$ has traces as in the previous section,
It is a straightforward exercise to
show graphically that if a system $A$ has an $n$-dimensional
measurement then $A$ itself is $n$-dimensional in the sense of $\Tr(A)=n$.

The pulling equations are fundamental to 2-categorical quantum
mechanics. We see graphically how a 
quantum system bifurcates onto classical surfaces:
$$
\raisebox{-0.4\height}{\ \includegraphics{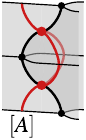}\ }
=
\raisebox{-0.4\height}{\ \includegraphics{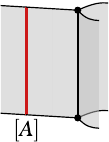}\ }
\hspace{20pt}
\raisebox{-0.4\height}{\ \includegraphics{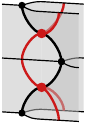}\ }
=
\raisebox{-0.4\height}{\ \includegraphics{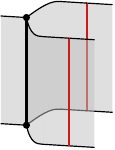}\ }
\hspace{20pt}
\raisebox{-0.4\height}{\ \includegraphics{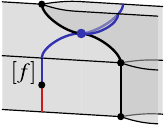}\ }
=
\raisebox{-0.4\height}{\ \includegraphics{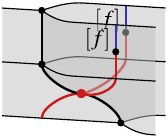}\ }
$$
%These equations are fundamental to the 2-categorical quantum
%mechanics below: it allows us to pull a quantum system
%onto classical sheets.
%This is fundamental to the teleportation protocol below,..
We use these classical surfaces to define controlled operations
as follows.
Given unitaries $U_1,U_2:A\to A$ in $\Cat$ the \emph{controlled}-$U$ 
operation is the 2-cell given by
$$
\raisebox{-0.4\height}{\ \includegraphics{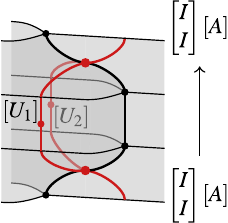}\ }
$$
A straightforward calculation using the pulling equations
shows that this is unitary.
A similar definition holds for higher arity control operations:
an $n$-ary control operation will have the $A$ system pulled
onto $n$ surfaces where the unitaries $U_1,...,U_n$ can act.

Note that these unitary 2-cells are all primitive 2-cells in the graphical 
calculus of \cite{Vicary2012-1};
here we are seeing more of the internal structure involved in these constructions.

Recall that a \emph{teleportation protocol}
on a system $A\in\Cat_0$
is a measurement $\mu$ on $A\tensor A^*$ and a control
operation $\gamma$ on $A$, such that:
$$
% %%%%%%%%% 2-d pics %%%%%%%%%%%%%%%%%%%%%%%%%%%%%%%%%%%%%%
\raisebox{-0.4\height}{\ \includegraphics{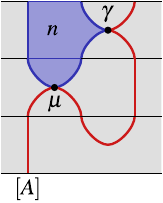}\ }
\ \raisebox{15pt}{$=\frac{1}{\sqrt{n}}$}\ 
\raisebox{-0.4\height}{\ \includegraphics{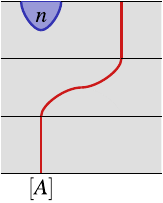}\ }
$$
Here we have labelled the blue region $n$ to denote $n$ copies of 
the grey surface: $n = \boxplus_{1}^n 1,$
and $\frac{1}{\sqrt{n}}$ denotes a morphism $I\to I$ in $\Cat$.

% finished 2-d pics %%%%%%%%%%%%%%%%%%%%%%%%%%%%%%%%%%%%%%

A \emph{unitary error basis} for a system $A\in\Cat_0$
is a sequence of unitaries $U_1:A\to A,...,U_n:A\to A$ in $\Cat_1$
such that the 2-cell 
$$
\mu =
\frac{1}{\sqrt{n}}
\raisebox{-0.4\height}{\ \includegraphics{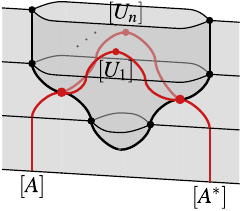}\ }
$$
is unitary.
(For clarity, we only show two surfaces, and also we ``slide''
$U_i$ onto the counit as $U_i:A\tensor A^*\to I$.)
This means that $\mu$ is 
a measurement on $A\tensor A^*$. 
If we further define $\gamma$ as the controlled-$(U_1^\dag,...,U_n^\dag)$ operation,
then we have the following teleportation protocol:
\begin{align*}
\raisebox{-0.4\height}{\ \includegraphics{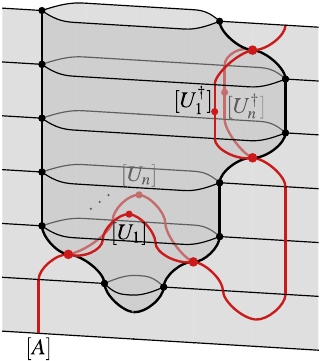}\ }
=
\raisebox{-0.4\height}{\ \includegraphics{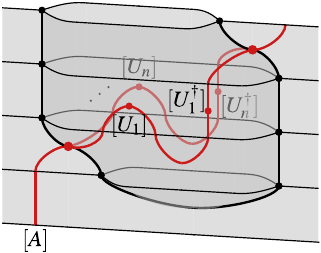}\ }
&=
\raisebox{-0.4\height}{\ \includegraphics{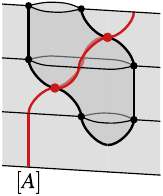}\ }
\\
&=
\raisebox{-0.4\height}{\ \includegraphics{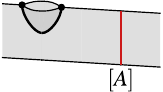}\ }
\end{align*}

Our graphical calculus is fine-grained enough that
we can use it to construct unitaries themselves.
For example, we define the \emph{qubit Pauli X} operation as the 2-cell
$$
\raisebox{-0.4\height}{\ \includegraphics{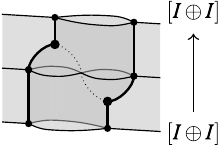}\ }
\ \ = \ \ 
\raisebox{-0.4\height}{\ \includegraphics{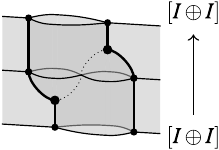}\ }
.$$
%This is possibly the most ornate definiton of the Pauli X operator in the literature:
This definition contains a crucial use 
of the horizontal associator as discussed in \S\ref{horizontal-associator}.

% ----------------------------------------------------------------------------
% ----------------------------------------------------------------------------
% ----------------------------------------------------------------------------

\subsection{Relation to other graphical calculi}\label{relation}

We examine the sheet diagrams from ~\cite{Comfort2020}.
%Given a morphism in a bimonoidal category $\Cat$,
Given objects $A, B, C, D$ and morphisms
$
f : A\oplus A\tensor B \to C,\ \ 
1_C : C\to C,\ \ 
g : B\tensor D \to A\oplus D
$
%\begin{align*}
%f &: A\oplus A\tensor B \to C \\
%1_C &: C\to C \\
%g &: B\tensor D \to A\oplus D
%\end{align*}
from a bimonoidal category $\Cat$,
the morphism  $f\tensor 1_C\tensor g:$ is rendered:
\begin{center}
\includegraphics[]{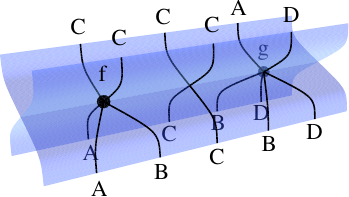}
\end{center}

Using our surface diagram calculus the same morphism
$f\tensor 1_C\tensor g$ is rendered as,
$$
\raisebox{-0.4\height}{\ \includegraphics{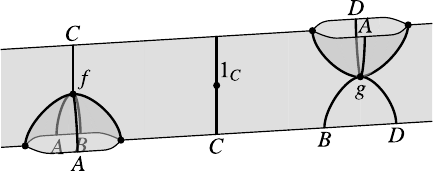}\ }
$$
All the strings here are labelled with $1\times 1$ matrices
so we neglect the bracket notation, such as $[A]$, etc.
Up to reversing the front-to-back ordering of layers,
the sheet diagram is seen to correspond to a normal form for
the more general surface diagrams. The general procedure
is to expand the bubbles, or pull strings onto the bubbles.
For example, we pull the string for the
domain of $1_C$ onto the domain of $f$:
$$
\raisebox{-0.4\height}{\ \includegraphics{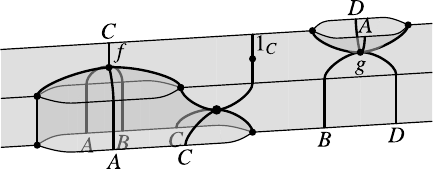}\ }
$$
and similarly for the $B$ and $D$ strings in the domain of $g$. 
Then we pull the codomain of $f$ and $1_C$ onto the codomain of $g$.

Another graphical calculus for bimonoidal categories
is given by tape diagrams \cite{Bonchi2022}.

Also, there is the $ZXW$ calculus which is a two dimensional string diagram 
calculus that combines a path calculus (for direct sums) and the
monoidal tensor product structure of vector spaces \cite{deFelice2023}.

% ----------------------------------------------------------------------------
% ----------------------------------------------------------------------------
% ----------------------------------------------------------------------------

\subsection{Where are we?}

%A basic theme in higher dimensional algebra
%is to replace non-local algebraic expressions with
%local higher dimensional expressions.
%Depending on one's goals and frame of mind this
%is either a bug or feature.
%
%The goal here is to visualize, or directly perceive,
%these structures, rather than to be completely rigorous.
%However, we do intend this work to be
%a stepping stone on the way to rigour, rather than a detour.

String diagrams for monoidal categories
are well understood~\cite{Joyal1991,Selinger2010}, and these readily 
generalize (via folklore~\cite{Ponto2013}) to string
diagrams for bicategories.
%But this is in reference to coherence/strictification.

Our graphical calculus fits into the
framework of the surface diagram calculus for symmetric monoidal
bicategories, see~\cite{Gurski2013} and discussion in ~\cite{Campbell2022} \S 4.1.

What we have is an \emph{additive monoidal bicategory}
$\Mat(\Cat)^\boxplus$, where the monoidal product $\boxplus$
is the direct sum of matrices.
There is also a \emph{multiplicative monoidal bicategory}
$\Mat(\Cat)^\boxtimes$ which uses a categorified Kronecker product $\boxtimes$ of matrices,
see \cite{Johnson2021} \S 8.12. 
The surface diagram calculus for $\Mat(\Cat)^\boxtimes$ is used in the context of
2-Hilbert spaces, see for example ~\cite{Heunen2019} \S 8.2, and\cite{Reutter2019biunitary}.
Putting these together we expect to have a 
\emph{bimonoidal bicategory} $\MatC^{\boxplus,\boxtimes}$.

We began this work with the question: what is an amplitude?
%The question that began this work ``what is an amplitude?''
This kind of \emph{what is} question 
-- what is a group, what is a space, etc. -- 
is answered by functorially representing into a semantic category. 
For us, the semantic category is a category of modules;
we have been secretly dealing with the free bimodule
bicategory over the category $\Cat$, which directly categorifies
linear algebra over the field of complex numbers (amplitudes).
By taking microcosms we recover $\Cat$ as the Hom category $\Hom(1,1)$
of the free bimodule bicategory $\MatC$.

\section{The bicategory of matrices over a bimonoidal category}\label{bicategory}

In this section we start over from the beginning, with a more
rigorous abstract exposition. The notation is somewhat different,
although consistent with the above. 

\begin{warning}
To be consistent with matrix notation, and
the horizontal direction of the graphical calculus,
the Hom categories are defined right-to-left:
$\Hom(m,n)$ consists of 1-cells $m\leftarrow n$.
\end{warning}

\begin{definition}~\label{matc-bicategory}
(\cite{Johnson2021} \S 8.1)
Given a category $\Cat$
and natural numbers $m,n=0,1,2...,$ we define the
\emph{matrix category} $\Mat_{m,n}(\Cat)$:

\Item
Objects are $m\times n$ matrices $A = [A_{ij}]$
with entries $A_{ij}\in\Cat_0$.
We render such an object as
$$
\raisebox{-0.4\height}{\ \includegraphics{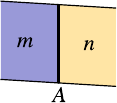}\ }
$$

\Item
%Morphisms $(A_{ij})\to (B_{ij})$
Morphisms $f:A\to B$
are matrices $[f_{ij}:A_{ij}\to B_{ij}]$ of
morphisms of $\Cat$.
These are rendered as
$$
\raisebox{-0.4\height}{\ \includegraphics{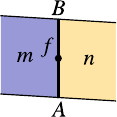}\ }
$$

\Item
Composition of morphisms $f:A\to B$ and $g:B\to C$ is
defined by componentwise composition in $\Cat$,
$(g\circ f)_{ij} := g_{ij}\circ f_{ij}.$
This composition is rendered vertically as
$$
\raisebox{-0.4\height}{\ \includegraphics{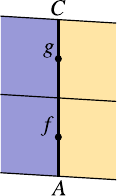}\ }
$$

\Item
The identity morphisms, $1_A:A\to A$ are defined as
the componentwise identities from $\Cat$,
$(1_A)_{ij} := 1_{A_{ij}}$.

\tombstone
\end{definition}

\begin{definition}\label{bicategory-def}
(\cite{Johnson2021} \S 8)
Given a bimonoidal category  $\Cat$,
we define the bicategory $(\Mat(\Cat),\alpha,\lambda,\rho)$
as follows:

\Item The objects, or 0-cells, are natural numbers $n=0,1,2...$
These are rendered as shaded regions:
$$
\raisebox{-0.4\height}{\ \includegraphics{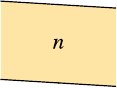}\ }
$$

\Item
For objects $m$ and $n$ of $\Mat(\Cat)$,
the hom categories are
defined as $\Hom(m,n) := \Mat_{m,n}(\Cat)$.

\Item
Horizontal composition is a functor
$$
    \Hom(l,m) \times \Hom(m,n) \to \Hom(l,n),
$$
given by \emph{matrix multiplication}.
For $A\in\Hom(l,m), B\in\Hom(m,n)$,
this is defined using the bimonoidal structure of $\Cat$:
$$
    (AB)_{ik} := \bigoplus_j A_{ij}\tensor B_{jk}.
$$
We render this horizontal composite as
$$
\raisebox{-0.4\height}{\ \includegraphics{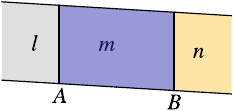}\ }
$$

\Item
Given object $m$, the unit 1-cell $1_m \in \Hom(m,m)$ is 
$$
    (1_m)_{ij} =
\begin{cases}
    1^\otimes \ \ \mathrm{if} \ i=j,  \\
    0^\oplus \ \ \mathrm{otherwise}.
\end{cases}
$$
This is rendered as:
$$
\raisebox{-0.4\height}{\ \includegraphics{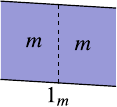}\ }
$$

\Item For objects $m,n$, the left unitor $\lambda_{m,n}$ has component
at 1-cells $A\in\Hom(m,n)$ the 2-cell $\lambda_{m,n}(A): 1_m A\to A$.
This is built by composing the left nullitors, 
additive unitors and left multiplicative unitor of $\Cat$.
Similarly for the right unitor $\rho_{m,n}$, which is
built by composing the right nullitors, 
additive unitors and right multiplicative unitors of $\Cat$.

\Item For objects $l,m,n,o$, the associator $\alpha_{l,m,n,o}$ 
has components at $A\in\Hom(l,m)$, $B\in\Hom(m,n)$, $C\in\Hom(n,o)$
given by the 2-cell $\alpha_{l,m,n,o}(A,B,C):(AB)C\to A(BC)$.
The $(i,p)$ entry of this matrix is a morphism in $\Cat$:
$$
     \bigoplus_k \bigl(\bigoplus_j A_{ij}\tensor B_{jk}\bigr)\tensor C_{kp}
    \to
     \bigoplus_j A_{ij}\tensor \bigl(\bigoplus_k B_{jk}\tensor C_{kp}\bigr)
$$
built by composing the left and right distributors,
additive associativity, additive symmetry, and multiplicative associators of $\Cat$.
\tombstone
\end{definition}

Given a bimonoidal category $\Cat$, 
we define the matrix $0_{m,n}\in\Mat_{m,n}(\Cat)$
as the matrix $[A_{ij}]$ with $A_{ij}=O$,
and we define the matrix $1_{m}\in\Mat_{m,m}(\Cat)$
as the matrix $[A_{ij}]$ with $A_{ii}=I$, and $A_{ij}=O$ for $i\ne j$.

% ----------------------------------------------------------------------------
% ----------------------------------------------------------------------------
% ----------------------------------------------------------------------------
\subsection{Biproducts in $\MatC$}

\begin{definition}
We define a 2-functor
$$
    \boxplus: \Mat(\Cat)\times\Mat(\Cat) \to \Mat(\Cat)
$$
as follows:

\Item
On 0-cells $m,n$ we have $m \boxplus n := m+n.$

\Item
On 1-cells $A:m\leftarrow n$ and $B:m'\leftarrow n'$,
the block-diagonal matrix 
$$
    A\boxplus B := \begin{bmatrix} A & 0 \\ 0 & B \end{bmatrix}.
$$

\Item
On 2-cells $f:A\to B$ and $g:C\to D$,
the block-diagonal matrix 
$$
    f\boxplus g := \begin{bmatrix} f & 0 \\ 0 & g \end{bmatrix}.
$$

\tombstone
\end{definition}

\begin{theorem}\label{matc-biproducts}
Given a bimonoidal category $\Cat$,
the bicategory $\Mat(\Cat)$ has all finite biproducts
and a zero object.
\end{theorem}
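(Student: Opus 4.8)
The plan is to verify that the 2-functor $\boxplus$ exhibits $\Mat(\Cat)$ as a bicategory with finite biproducts, in the sense appropriate for bicategories (i.e.\ a coherent simultaneous product and coproduct structure, where the universal properties hold up to coherent equivalence of hom-categories rather than on the nose). First I would identify the zero object: the 0-cell $0\in\Natural$, whose only hom-category $\Hom(0,0)$ is the terminal category (the unique $0\times 0$ matrix, with identity 2-cell), and for any $n$ the hom-categories $\Hom(0,n)$ and $\Hom(n,0)$ each contain a unique object — the empty matrices $0_{0,n}$ and $0_{n,0}$ — with only identity 2-cells, so they are terminal categories. Hence $0$ is a biterminal/biinitial object: for each $n$ the category of 1-cells $n\to 0$ (and $0\to n$) is equivalent (in fact isomorphic) to the terminal category. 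This gives the zero object, and composites through $0$ give the zero 1-cells $0_{m,n}:n\to m$, with the expected zero 2-cells between them.

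Next I would treat the binary case $m\boxplus n = m+n$, which by induction yields all finite (co)products. The coprojections $\iota_1:m\to m+n$ and $\iota_2:n\to m+n$ are the 1-cells given by the block matrices $\begin{bmatrix}1_m\\0\end{bmatrix}$ and $\begin{bmatrix}0\\1_n\end{bmatrix}$, and dually the projections $\pi_1,\pi_2$ are their transposes $[1_m\ \ 0]$ and $[0\ \ 1_n]$. The key computation is that the induced functor
$$
\Hom(k, m)\times\Hom(k, n)\ \longrightarrow\ \Hom(k, m+n),\qquad (A,B)\mapsto \begin{bmatrix}A\\B\end{bmatrix}
$$
(given by copairing along $\iota_1,\iota_2$) is an equivalence of categories — indeed an isomorphism, since an $(m+n)\times k$ matrix is literally a pair of an $m\times k$ and an $n\times k$ matrix, and likewise for 2-cells. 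The same holds for $\Hom(m+n,k)\simeq\Hom(m,k)\times\Hom(n,k)$ via the projections. Because product and coproduct structures coincide (the biproduct matrix is block-diagonal either way), one gets a biproduct; the comparison 2-cells $\pi_i\iota_j \Rightarrow 1$ or $0$ are built from the unitors, nullitors and additive unitors of $\Cat$, exactly as in Definition~\ref{bicategory-def}.

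Finally I would check the coherence obligations: that $\boxplus$ as defined is genuinely a 2-functor (functoriality on 2-cells is componentwise, pseudofunctoriality constraints for horizontal composition come from the block structure and the associator/unitors of Definition~\ref{bicategory-def}), and that the projection/coprojection 1-cells are pseudonatural with invertible 2-cells witnessing the defining equivalences, satisfying the triangle-type coherence. The main obstacle is precisely this bookkeeping of 2-cell coherence — verifying that the isomorphisms $\pi_i\iota_j\Rightarrow\mathrm{id}$, $\pi_i\iota_j\Rightarrow 0$ assembled from the bimonoidal data of $\Cat$ satisfy the bicategorical biproduct axioms (the analogue of the matrix identity $\pi\iota = \mathrm{id}$ and the interchange making $\langle\iota_1\pi_1,\iota_2\pi_2\rangle$ coherently the identity). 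I expect each such coherence to reduce, entrywise, to a coherence already guaranteed by the Laplaza--Kelly axioms for bimonoidal categories~\cite{Laplaza1971,Kelly1974}, so no new work is needed beyond careful diagram-pasting; but this is where essentially all the content of the proof lies, the zero-object and equivalence-of-hom-categories parts being immediate from the definitions.
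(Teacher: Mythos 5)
Your proposal is correct and its core is exactly the paper's argument: the zero object is exhibited by observing that $\Hom(n,0)$ and $\Hom(0,n)$ are terminal categories on-the-nose, and the binary biproduct $m\boxplus n=m+n$ is exhibited by the strict isomorphism $\Mat_{m+n,l}(\Cat)\cong\Mat_{m,l}(\Cat)\times\Mat_{n,l}(\Cat)$ given by splitting a matrix into blocks, together with its horizontal opposite and induction for the general finite case. The one place where you diverge is your final paragraph: you frame the verification that the comparison 2-cells $\pi_i\iota_j\Rightarrow 1$ (resp.\ $\Rightarrow 0$) satisfy the bicategorical biproduct axioms as ``where essentially all the content of the proof lies.'' The paper avoids this entirely by taking the representability formulation of (bi)products as the definition: once the hom-category splitting is seen to be an isomorphism of categories, pseudonatural in $l$ (which is immediate, since concatenation of matrices commutes strictly with horizontal composition on the $l$ side), the universal property is already established, and the explicit (co)projections and their coherence are consequences rather than additional obligations. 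So your worry about Laplaza--Kelly-style diagram-pasting, while not wrong, is work the paper's route renders unnecessary; if you do want the explicit $\iota$'s and $\pi$'s (as the block matrices you wrote down), it is cleaner to derive them after the fact from the equivalence than to verify the triangle-type coherences by hand.
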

\begin{proof}

\Item
We claim that $0$ is a zero object.
To see that $0$ is an initial object,
we find the category $\Hom(n,0)=\Mat_{n,0}(\Cat)$ is
terminal on-the-nose. It has one
object which is the unique $n\times 0$ matrix $[\ ]_{n,0}$
and one identity morphism, also an empty matrix:
$$
%%py Cell2(tgt, src, pip_color=None)
%\hspace{35pt}
\raisebox{-0.4\height}{\ \includegraphics{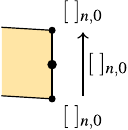}\ }
$$
The horizontally opposite argument shows that $0$ is also a terminal object.

\Item
Given objects $m,n$ in $\MatC$, we show that
$m\boxplus n$ is the product of $m$ and $n$.
This means we must show a natural equivalence of
categories $\Hom(m\boxplus n, l) \cong \Hom(m,l)\times\Hom(n,l)$
for all objects $l$.
But this follows immediately from the definition of the $\Hom$ categories
by concatenation of matrices,
$\Mat_{m+n,l}(\Cat) \cong \Mat_{m,l}(\Cat)\times \Mat_{n,l}(\Cat)$.
The horizontally opposite argument shows that $m\boxplus n$
is the coproduct of $m$ and $n$, and so $m\boxplus n$ is a biproduct.
By iterating this we obtain all finite biproducts.
%\Item We have all finite biproducts because 
%$l\boxplus (m\boxplus n) = (l\boxplus m)\boxplus n$
\end{proof}

% ----------------------------------------------------------------------------
% ----------------------------------------------------------------------------
% ----------------------------------------------------------------------------
\subsection{Symmetric monoidal structure in $\MatC$}\label{matc-symmetric}

\newcommand{\Zero}{0}

\begin{theorem}\label{symmetric-monoidal}
Given a bimonoidal category $\Cat$, then
$\MatC^\boxplus := (\Mat(\Cat), \boxplus, \Zero)$
is a symmetric monoidal bicategory.
\end{theorem}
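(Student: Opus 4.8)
The plan is to verify the axioms of a symmetric monoidal bicategory as presented, e.g., in \cite{Gurski2013} (or equivalently \cite{Johnson2021} \S 8), exhibiting each piece of data and then invoking coherence to reduce the axiom-checking to the coherence of the bimonoidal structure on $\Cat$. By Theorem~\ref{matc-biproducts} we already know $\Mat(\Cat)$ has a zero object $\Zero$ and that $\boxplus$ implements finite biproducts; the standard fact that a bicategory with finite biproducts is canonically symmetric monoidal under the biproduct does most of the work. So the first step is to promote $\boxplus$ to a genuine monoidal structure: the associator $(A\boxplus B)\boxplus C \cong A\boxplus(B\boxplus C)$ is, on 1-cells, literally the identity on block-diagonal matrices after reassociating $+$ on the index sets $m+n+p$, so the associator adjoint equivalence is essentially trivial (it is the identity up to the strict associativity of natural-number addition), and likewise the left and right unitors are strict because $\Zero\boxplus M = M\boxplus\Zero = M$ on the nose. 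The symmetry $\beta_{m,n}\colon m\boxplus n \to n\boxplus m$ is the permutation 1-cell that swaps the two blocks, i.e.\ the block off-diagonal matrix with identity 1-cells $1^\tensor$ in the appropriate entries and $0^\oplus$ elsewhere; its pseudo-inverse is the reverse swap, with the invertible modifications built from the additive unitors and left/right nullitors of $\Cat$, exactly as the {\tt(comm)}/{\tt(cocomm)} 2-cells were described in Section~\ref{matc-symmetric}.

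Next I would record the required invertible modifications: the pentagonator $\pi$, the left/right/middle 2-unitors, and the syllepsis-type data for the symmetry (the $R$-hexagonators $R_{-|--}$ and $R_{--|-}$ relating $\beta$ to the associator, and the modification exhibiting $\beta_{n,m}\circ\beta_{m,n}\simeq \mathrm{id}$). Because the underlying monoidal structure is biproduct-based and strict on 1-cells, every one of these modifications has components that are identities on objects of the relevant matrix categories, or at worst canonical isomorphisms assembled from $\alpha^\oplus,\lambda^\oplus,\rho^\oplus,\sigma^\oplus$ of $\Cat$ acting blockwise. The key structural observation to state is that $\boxplus$ interacts with horizontal composition (matrix multiplication) via the evident interchange: $(A\boxplus B)(C\boxplus D) \cong AC\boxplus BD$ with the isomorphism built from the nullitors $\lambda_A,\rho_A$ (killing the cross terms $A\tensor O$, $O\tensor D$) and the additive unitors of $\Cat$ — this is precisely what makes $\boxplus$ a $2$-functor and supplies the compositor/tensorator data for the monoidal pseudofunctor $\boxplus\colon\Mat(\Cat)\times\Mat(\Cat)\to\Mat(\Cat)$.

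The main obstacle is purely bookkeeping: one must check the three large coherence axioms of a symmetric monoidal bicategory (the non-abelian $4$-cocycle / pentagonator equation, the two hexagon-compatibility axioms for $R$, and the syllepsis/bilinearity axiom for $\beta$) commute. I would argue this by observing that each such axiom, evaluated componentwise, becomes an equation among morphisms of $\Cat$ built solely from $\alpha^\oplus$, $\lambda^\oplus$, $\rho^\oplus$, $\sigma^\oplus$, the nullitors, and (where cross terms of matrix products intervene) the distributors $\delta^l,\delta^r$; every such diagram commutes by the Laplaza–Kelly coherence theorem for bimonoidal categories~\cite{Laplaza1971,Kelly1974} already invoked in Definition~\ref{bicategory-def}. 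Thus the proof reduces, after setting up the data, to a single appeal to coherence, and I would present it at that level rather than expanding the diagrams. Finally I would remark that the full $2$-functoriality of $\boxplus$ and the pseudonaturality of $\beta$, $\alpha$ on $1$- and $2$-cells were already exhibited in the user's-guide discussion (the naturality $2$-cells and the $2$-naturality equation displayed in Section~\ref{matc-symmetric}), so those verifications are not repeated here.
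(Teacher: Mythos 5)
Your proposal is correct in substance but takes a genuinely different route from the paper. The paper's entire proof is two lines: by Theorem~\ref{matc-biproducts}, $\Mat(\Cat)$ is a bicategory with binary products $\boxplus$ and a terminal object $0$, and it then cites \cite{Carboni2008} Theorem 2.15, which says that such a bicategory is canonically a symmetric monoidal bicategory. You mention this ``standard fact'' in passing, but then proceed to construct all the data by hand (associator, unitors, the swap 1-cell, pentagonator, hexagonators, the interchange $(A\boxplus B)(C\boxplus D)\cong AC\boxplus BD$) and discharge the coherence axioms by a componentwise appeal to Laplaza--Kelly coherence for $\Cat$. Both routes work. What the paper's citation buys is brevity and a clean separation of concerns: once biproducts are established, no further verification is needed, and the canonicity of the resulting monoidal structure comes for free. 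What your explicit construction buys is exactly the data the rest of the paper actually draws (the block-swap symmetry $\sigma^\boxplus$, the syllepsis built from unitors and nullitors), so it is closer to how Section~\ref{matc-symmetric} is used downstream. One point to tighten if you keep your route: Laplaza's coherence theorem does not assert that \emph{every} diagram of bimonoidal structure maps commutes --- it carries regularity hypotheses --- so you should either check that the diagrams arising here are of the regular kind (they are: they involve distinct object variables and no problematic repeated distributors) or fall back on the biproduct argument, which sidesteps the issue entirely.
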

\begin{proof}
By the previous Theorem ~\ref{matc-biproducts},
$\MatC$ is a bicategory with binary product $\boxplus$
and terminal object $0$. The result follows
by \cite{Carboni2008} Theorem 2.15.
\end{proof}

The monoidal product is strict. On 0-cells $l,m,n$ of $\Mat(\Cat)$
\begin{align*}
    (l\boxplus m)\boxplus n &= l\boxplus (m\boxplus n), \\
    0 \boxplus l &= l, \\
    l\boxplus 0  &= l.
\end{align*}
And similarly for the 1- and 2-cells of $\Mat(\Cat)$.
The monoidal product of 0,1,2-cells is rendered by layering cells:
%as in the painter's algorithm:
\begin{align*}
\raisebox{-0.4\height}{\ \includegraphics{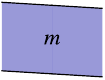}\ }
\boxplus
\raisebox{-0.4\height}{\ \includegraphics{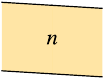}\ }
&=
\raisebox{-0.4\height}{\ \includegraphics{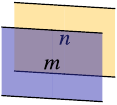}\ }
\\
\raisebox{-0.4\height}{\ \includegraphics{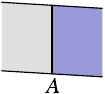}\ }
\boxplus
\raisebox{-0.4\height}{\ \includegraphics{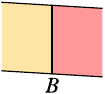}\ }
&=
\raisebox{-0.4\height}{\ \includegraphics{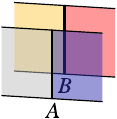}\ }
\\
\raisebox{-0.4\height}{\ \includegraphics{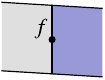}\ }
\boxplus
\raisebox{-0.4\height}{\ \includegraphics{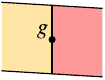}\ }
&=
\raisebox{-0.4\height}{\ \includegraphics{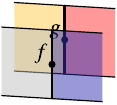}\ }
\end{align*}

The components of the \emph{symmetry}
2-natural isomorphism, $\sigma_{m,n}^\boxplus: n\boxplus m\leftarrow m\boxplus n $
are given by the $(n+m)\times (m+n)$ block matrix:
$$
    \sigma_{m,n}^\boxplus := \begin{bmatrix} 0_{n,m} & 1_n \\ 1_m & 0_{m,n} \end{bmatrix}.
$$
and rendered as the 1-cell:
$$
\raisebox{-0.4\height}{\ \includegraphics{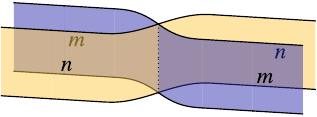}\ }
$$

The \emph{syllepsis} 2-cell
$
\nu_{m,n}^\boxplus:
\sigma_{n\boxplus m}^\boxplus \sigma_{m\boxplus n}^\boxplus \to 1_{m\boxplus n}
$
is defined by composing unitors and nullitors of $\Cat$.
We render this as
$$
\raisebox{-0.4\height}{\ \includegraphics{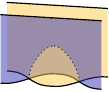}\ }
.
$$

%\subsection{Symmetric pseudobimonoids in $\MatC^\boxplus$}
%
%\begin{definition}
%Given a bimonoidal category $\Cat$, and object $m\in\Mat(\Cat)$,
%we define the \emph{diagonal} 
%1-cell 
%$\nabla_m:m\boxplus m\leftarrow m$ and
%the \emph{codiagonal}  1-cell 
%$\Delta_m\leftarrow m\boxplus m$ as
%$$
%    \nabla_m := \begin{pmatrix} 1_m \\ 1_m \end{pmatrix} \ \ \mathrm{and} \ \ 
%    \Delta_m := ( 1_m\ \  1_m ).
%$$
%For $m=%py m
%$,
%the diagonal and codiagonal are rendered as
%$$
%\nabla_m =
%%py Cell1(m@m, m)
%,\ \ 
%\Delta_m = 
%%py Cell1(m, m@m)
%.
%$$
%\tombstone
%\end{definition}
%
%\todo{finish this}
%
%\begin{theorem}
%Given a bimonoidal category $\Cat$, 
%we have a pseudomonoid structure on objects $m\in\Mat(\Cat)$ given by ...
%and moreover, any 1-cell $A:m\leftarrow n$ is a morphism of
%the pseudomonoid structures, and any 2-cell $f:A\to B$ is a 2-morphism
%of pseudomonoid morphisms.
%\end{theorem}
%\begin{proof}
%\end{proof}

% ----------------------------------------------------------------------------
% ----------------------------------------------------------------------------
% ----------------------------------------------------------------------------
\subsection{The category of scalars}\label{microcosms}

An important idea here is given by \emph{taking microcosms}:
the 1-categorical level is found \emph{in toto} within the 
2-categorical structure, rather than as a truncation (quotient) thereof.
These microcosms will be Homs of the 2-category.
The terminology is in relation to the \emph{microcosm principle}
\cite{Baez1998}.

\begin{theorem}
Given a bimonoidal category $\Cat$, the
hom category $\Hom(1,1)$ in $\Mat(\Cat)$ has a bimonoidal structure 
such that $\Hom(1,1)$ is bimonoidally equivalent to $\Cat$.
\end{theorem}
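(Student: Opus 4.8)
The plan is to unpack what $\Hom(1,1)$ looks like concretely and exhibit the equivalence as essentially an identity-on-objects functor. By Definition~\ref{matc-bicategory}, $\Hom(1,1) = \Mat_{1,1}(\Cat)$ is the category whose objects are $1\times 1$ matrices $[A]$ with $A\in\Cat_0$ and whose morphisms $[f]:[A]\to[B]$ are $1\times 1$ matrices of morphisms of $\Cat$, with componentwise composition. So the forgetful assignment $U:\Hom(1,1)\to\Cat$, $[A]\mapsto A$, $[f]\mapsto f$, is already an isomorphism of underlying categories (in fact a strict equivalence). The content of the theorem is therefore not the equivalence of categories but the transport of bimonoidal structure: I must define $\oplus$, $\tensor$, the units, and all the coherence isomorphisms on $\Hom(1,1)$ and check that $U$ is a (strong) bimonoidal functor.

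First I would define the multiplicative structure. Horizontal composition in $\Mat(\Cat)$ restricts to a functor $\Hom(1,1)\times\Hom(1,1)\to\Hom(1,1)$; by the formula $(AB)_{11} = \bigoplus_{j\in\{1\}} A_{1j}\tensor B_{j1} = A\tensor B$ (a one-term direct sum), this is literally $\tensor$ composed with the canonical iso collapsing a singleton sum. The multiplicative unit is the unit 1-cell $1_1 = [I]$. The multiplicative associator and unitors of $\Hom(1,1)$ are then the components of $\alpha$, $\lambda$, $\rho$ from Definition~\ref{bicategory-def} at objects $1,1,1,1$; by construction these are built from $\alpha^\tensor$, $\lambda^\tensor$, $\rho^\tensor$ of $\Cat$ together with the singleton-collapse isomorphisms, so $U$ carries them to the multiplicative data of $\Cat$. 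Next I would define the additive structure using the biproduct: Theorem~\ref{matc-biproducts} gives $\Mat(\Cat)$ all finite biproducts, and I would set $[A]\oplus[B]$ to be the image under this biproduct, observing that the relevant hom-equivalences restrict on $\Hom(1,1)$ to an honest additive structure with unit $O$ (the $1\times 1$ zero matrix $[O]$). The additive symmetric monoidal structure $(\oplus, O, \alpha^\oplus, \lambda^\oplus, \rho^\oplus, \sigma^\oplus)$ on $\Hom(1,1)$ is then transported from $\Cat$ through this same identification. Finally the distributors and nullitors: the left distributor $\delta^l_{[A],[B],[C]}$ on $\Hom(1,1)$ is extracted from the horizontal associator $\alpha_{1,1,1,1}$ exactly as in \S\ref{horizontal-associator}, where the excerpt already displays $[A](\[B\ C][\begin{smallmatrix}I\\I\end{smallmatrix}]) \cong ([A][B\ C])[\begin{smallmatrix}I\\I\end{smallmatrix}]$ realizing $[\delta^l_{A,B,C}]$; the nullitors likewise come from the interaction of $1_{1,0}$, $1_{0,1}$ with horizontal composition.

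The verification then has two parts. One part is that each transported isomorphism is natural and invertible — this is immediate since $U$ is an isomorphism of underlying categories and all the structure maps of $\Mat(\Cat)$ are 2-natural isomorphisms. The other, and the genuine obstacle, is checking that the restricted structure satisfies the Laplaza--Kelly coherence axioms~\cite{Laplaza1971,Kelly1974} for a bimonoidal category, i.e.\ that $U$ is a strong bimonoidal functor. I would argue this not by re-deriving coherence from scratch but by a naturality/coherence-collapse argument: the bicategory axioms for $\Mat(\Cat)$ (pentagon, triangle, and the symmetric-monoidal-bicategory axioms from Theorem~\ref{symmetric-monoidal}), when restricted to the full sub-bicategory on the single object $1$ — which is just the monoidal category $(\Hom(1,1),\tensor)$ together with the biproduct $\oplus$ — specialize exactly to the bimonoidal coherence diagrams, because each higher cell in those axioms has a unique $(1,1)$-component and the singleton-collapse isomorphisms are themselves coherent. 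Concretely I would note that the construction of $\alpha$ in Definition~\ref{bicategory-def} "built by composing the left and right distributors, additive associativity, additive symmetry, and multiplicative associators" becomes, at $(1,1,1,1)$, precisely $\delta^l$ (the other summands being trivial), and then the bicategory pentagon for $\alpha$ reduces to one of Laplaza's hexagons; the remaining Laplaza axioms arise similarly from the unit and symmetry coherences. The main work, and where I would spend the most care, is bookkeeping the singleton-collapse isomorphisms $\bigoplus_{j\in\{1\}}(-)\cong(-)$ so that they do not obstruct these reductions — they are canonical and coherent, so they can be absorbed, but making that precise is the crux of the proof.
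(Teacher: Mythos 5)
Your overall architecture matches the paper's: identify $\Hom(1,1)=\Mat_{1,1}(\Cat)$ with $\Cat$ by stripping brackets, take the multiplicative structure from horizontal composition, and get the additive structure from the biproducts of Theorem~\ref{matc-biproducts}. The gap is in how the additive structure actually lands in $\Hom(1,1)$. The biproduct $[A]\boxplus[B]=\bigl[\begin{smallmatrix}A&O\\O&B\end{smallmatrix}\bigr]$ is a 1-cell $2\leftarrow 2$, i.e.\ an object of $\Hom(2,2)$, and no ``restriction of hom-equivalences to $\Hom(1,1)$'' brings it back: the equivalence $\Hom(2,1)\simeq\Hom(1,1)\times\Hom(1,1)$ by itself does not produce a functor $\Hom(1,1)\times\Hom(1,1)\to\Hom(1,1)$. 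The missing (and essential) step --- which is precisely the third and fourth items of the paper's proof --- is to sandwich the external sum between the codiagonal and diagonal 1-cells, $[A]\oplus[B] := [I\ \ I]\,([A]\boxplus[B])\,\bigl[\begin{smallmatrix}I\\I\end{smallmatrix}\bigr]$, the categorified {\tt(add)} rule; likewise $\sigma^\oplus$ is obtained by conjugating $\sigma^\boxplus$ with these same 1-cells. You clearly have this construction available (you invoke the \S\ref{horizontal-associator} computation with $\bigl[\begin{smallmatrix}I\\I\end{smallmatrix}\bigr]$ when discussing the distributor), but you never state it where it is needed, and it is the whole content of the theorem: the internal $\oplus$ of $\Cat$ is recovered only by passing through the object $2$.

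The same confusion produces a concretely false step in your coherence argument. At $(1,1,1,1)$ all matrices are $1\times1$, every direct sum in the formula for the associator is a singleton, and the component of $\alpha$ is just $\alpha^\tensor$ --- no distributor appears, contrary to your claim that it ``becomes precisely $\delta^l$.'' The distributor is the component $\alpha_{1,1,2,1}$ evaluated at $[A]$, $[B\ \ C]$, $\bigl[\begin{smallmatrix}I\\I\end{smallmatrix}\bigr]$, so it lives outside the full sub-bicategory on the object $1$; that sub-bicategory carries only the multiplicative monoidal structure, and restricting the bicategory axioms to it cannot yield the Laplaza coherences. A corrected version of your coherence-collapse argument must range over the objects $1,2,4,\ldots$ and track the diagonal and codiagonal 1-cells throughout. (To be fair, the paper's own proof is explicitly an outline and does not verify the Laplaza axioms either, so you are not being held to a higher standard there --- but the reduction you do write down is wrong as stated.)
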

\begin{proof} We outline this as follows:

\Item 
The category 
$\Hom(1,1) = \Mat_{1,1}(\Cat)$ 
is isomorphic to $\Cat$ after removing the matrix brackets.

\Item
The horizontal composition 
$\Hom(1,1)\times\Hom(1,1)\to\Hom(1,1)$
gives 
$\Hom(1,1)$ a monoidal structure, which is
identical to the monoidal structure 
$(\otimes, I, \alpha^\otimes, \lambda^\otimes, \rho^\otimes)$ of $\Cat$.

\Item
We define the additive monoidal structure
$\oplus:\Hom(1,1)\times\Hom(1,1)\to\Hom(1,1)$ 
using the 1-cells given by the diagonal
$\begin{bmatrix}I\\I\end{bmatrix}$
and codiagonal $\begin{bmatrix}I&I\end{bmatrix}$.

\Item
Similarly, the symmetry $\sigma^\boxplus$ gives a
symmetry $\sigma^\oplus$ by using the diagonal and codiagonal 1-cells.
\end{proof}

%\begin{corollary}
%Given a bimonoidal category $\Cat$,
%terms $T$ in the graphical calculus for $\Mat(\Cat)$ 
%yield well defined expressions %in the category $\Cat$,
%$
%\llbracket T \rrbracket \in \Cat,
%$
%whenever $T\in\Mat_{1,1}(\Cat)$.
%\end{corollary}
%\begin{proof}
%This follows from ~\cite{Gurski2013} Theorem 1.25.
%See also discussion in ~\cite{Campbell2022} \S 4.3.
%\end{proof}

%\bibliography{refs}{}
%\bibliographystyle{alpha}

% ----------------------------------------------------------------------------
% ----------------------------------------------------------------------------
% ----------------------------------------------------------------------------
\subsection{The dagger}\label{dagger}

\begin{definition}
(\cite{Karvonen2019})
A dagger category $(\Cat, \dag)$ is a 
category $\Cat$ together with a 
functor $\dag:\Cat\to\Cat^{\mathrm{op}}$
that is involutive and identity  on objects:
$f^{\dag\dag}=f$ and $A^\dag=A$ for morphisms $f\in\Cat_1$
and objects $A\in\Cat_0$.
\tombstone
\end{definition}

\begin{definition}
In a dagger category $(\Cat,\dag)$, a morphism $f\in\Cat_1$ 
is \emph{unitary} when $f^{-1} = f^\dag$.
\tombstone
\end{definition}

\begin{definition}
A \emph{dagger monoidal} category $\Cat$ is
a monoidal category 
$ (\Cat, \tensor, I, \alpha^\tensor, \lambda^\tensor, \rho^\tensor) $
and a dagger category $(\Cat,\dag)$,
such that $(f\tensor g)^\dag = f^\dag\tensor g^\dag$
for all morphisms $f,g\in\Cat_1$, and 
the natural isomorphisms 
$\alpha^\tensor, \lambda^\tensor, \rho^\tensor$
have unitary components.
Similarly, a \emph{dagger symmetric monoidal} category,
has symmetry $\sigma^\tensor$ with unitary components.
\tombstone
\end{definition}

\begin{definition}
A \emph{dagger bimonoidal} category $\Cat$ is
a bimonoidal category $\Cat$ whose additive symmetric monoidal structure
is dagger symmetric monoidal 
and whose multiplicative monoidal structure is dagger monoidal.
%$$
%(\Cat, \oplus, O, \alpha^\oplus, \lambda^\oplus, \rho^\oplus, \sigma^\oplus,
%\tensor, I, \alpha^\tensor, \lambda^\tensor, \rho^\tensor)
%$$
%and a dagger category $(\Cat,\dag)$,
%such that the natural isomorphisms 
%$\alpha^\oplus, \lambda^\oplus, \rho^\oplus, \sigma^\oplus,
%\alpha^\tensor, \lambda^\tensor, \rho^\tensor$
%have unitary components.
%\tombstone
\end{definition}

\begin{definition}
A \emph{dagger bicategory} $(\Bicat,\alpha,\lambda,\rho,\dag)$
is a bicategory $(\Bicat,\alpha,\lambda,\rho)$ such that:

\Item
The categories $\Hom(m,n)$ are dagger categories:
for objects $m,n\in\Bicat$ we have a 
functor 
$$\dag_{m,n}:\Hom(m,n)\to\Hom(m,n)^{\mathrm{op}}
$$
that is the identity on objects
(1-cells of $\Bicat$) and 
%$\dag_{m,n}\dag_{m,n}$ is the identity functor.
$\dag_{m,n}^{\mathrm{op}}\dag_{m,n}$ is the identity functor.

\Item The natural isomorphisms $\alpha,\lambda,\rho$ have
unitary components:
$$
\alpha_{l,m,n,o}^{-1} = (\alpha_{l,m,n,o})^{\dag_{l,o}},\ \ 
\lambda_{l,m}^{-1} = (\lambda_{l,m})^{\dag_{l,m}},\ \ 
\rho_{l,m}^{-1} = (\rho_{l,m})^{\dag_{l,m}},
$$
for objects $l,m,n,o$ of $\Bicat$.
\tombstone
\end{definition}

\begin{theorem}\label{dagger-bicategory}
Given a dagger bimonoidal category $\Cat$, 
$(\Mat(\Cat), \alpha,\lambda,\rho,\dag)$
is a dagger bicategory, where we define the functors
$$
\dag_{m,n}:\Mat_{m,n}(\Cat) \to \Mat_{m,n}(\Cat)^{\mathrm{op}}
$$
by applying $\dag:\Cat\to\Cat^{\mathrm{op}}$ componentwise.
\end{theorem}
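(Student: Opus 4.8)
The plan is to verify the two defining clauses of a dagger bicategory in turn, the first being routine and the second carrying all the content. For the first clause --- that each $\Hom(m,n)=\Mat_{m,n}(\Cat)$ is a dagger category via the componentwise $\dag_{m,n}$ --- I would use that composition and identities in $\Mat_{m,n}(\Cat)$ are defined entrywise (Definition~\ref{matc-bicategory}): since $\dag\colon\Cat\to\Cat^{\mathrm{op}}$ is a functor, the entrywise assignment $(f_{ij})\mapsto(f_{ij}^\dag)$ is a functor $\Mat_{m,n}(\Cat)\to\Mat_{m,n}(\Cat)^{\mathrm{op}}$, as it preserves entrywise identities and reverses entrywise composition. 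It is the identity on $1$-cells because a $1$-cell is a matrix $[A_{ij}]$ of objects of $\Cat$ and $A_{ij}^\dag=A_{ij}$, so $[A_{ij}]^\dag=[A_{ij}]$; and $\dag_{m,n}^{\mathrm{op}}\dag_{m,n}=\mathrm{id}$ because $f_{ij}^{\dag\dag}=f_{ij}$ entrywise. So every $\Hom(m,n)$ is a dagger category.

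The substance is the second clause: the structural $2$-cells $\alpha,\lambda,\rho$ of $\Mat(\Cat)$ have unitary components. The key observation is that in a dagger bimonoidal category the unitary morphisms form a wide subgroupoid closed under composition, under $\oplus$, and under $\tensor$ --- hence under finite iterated sums $\bigoplus$: if $u,v$ are unitary then $(u\tensor v)^\dag=u^\dag\tensor v^\dag=u^{-1}\tensor v^{-1}=(u\tensor v)^{-1}$, and likewise $(u\oplus v)^\dag=(u\oplus v)^{-1}$ and $(v\circ u)^\dag=u^\dag\circ v^\dag=(v\circ u)^{-1}$, using that $\dag$ is identity-on-objects, involutive, and satisfies $(f\tensor g)^\dag=f^\dag\tensor g^\dag$ and $(f\oplus g)^\dag=f^\dag\oplus g^\dag$. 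By the dagger bimonoidal axioms every structural isomorphism of $\Cat$ --- the additive associator, unitors and symmetry, the multiplicative associator and unitors, and the distributors $\delta^l,\delta^r$ and nullitors --- has unitary components, so any morphism of $\Cat$ built from these by composition, $\oplus$ and $\tensor$ is unitary. Now by Definition~\ref{bicategory-def} each entry of $\lambda_{m,n}(A)$, of $\rho_{m,n}(A)$, and of $\alpha_{l,m,n,o}(A,B,C)$ is precisely such a composite of structural isomorphisms of $\Cat$; hence each entry is unitary, i.e.\ its $\dag$ equals its inverse. Since the dagger and the inversion of $2$-cells in $\Mat(\Cat)$ are both computed entrywise, this yields $\alpha_{l,m,n,o}^{-1}=(\alpha_{l,m,n,o})^{\dag_{l,o}}$ and likewise for $\lambda,\rho$, which is the required unitarity of components.

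I expect the main obstacle to be bookkeeping rather than conceptual: one must check that the entries of $\alpha,\lambda,\rho$ involve \emph{only} structural isomorphisms of $\Cat$ --- in particular that the assembling of block matrices and the folding of the sums $\bigoplus_j$ are mediated solely by additive associators, unitors and nullitors, with no stray non-unitary morphism --- and that the definition of \emph{dagger bimonoidal category} is read so that the distributors $\delta^l,\delta^r$ and nullitors also have unitary components (the natural dagger refinement of Laplaza's coherence data, needed because the associator $\alpha$ of $\Mat(\Cat)$ is built from the distributors, as in \S\ref{horizontal-associator}). Granting this, the argument goes through. As a remark, although the definition of dagger bicategory used here does not require it, the componentwise formula together with $(f\tensor g)^\dag=f^\dag\tensor g^\dag$ and $(f\oplus g)^\dag=f^\dag\oplus g^\dag$ also makes $\dag$ compatible with horizontal composition of $2$-cells, so the resulting structure is as well-behaved as one could want.
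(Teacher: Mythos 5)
Your proposal follows essentially the same route as the paper's proof: componentwise verification that each $\Hom(m,n)$ is a dagger category, followed by the observation that the components of $\alpha,\lambda,\rho$ are composites of structural isomorphisms of $\Cat$ and hence unitary, with dagger and inverse both computed entrywise. Your remark that one must read the definition of dagger bimonoidal category as also requiring the distributors and nullitors to have unitary components is a fair catch --- the paper's definition does not state this explicitly but its proof silently assumes it when asserting that the structure morphisms entering the associator ``are all unitary.''
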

\begin{proof}
Componentwise application of $\dag:\Cat\to\Cat^{\mathrm{op}}$ 
is easily seen to give a dagger structure $\dag_{m,n}$ on
the categories $\Hom(m,n)$.
From Definition~\ref{matc-bicategory},
we see the associator and left and right unitor are built by
composing the bimonoidal structure morphisms of $\Cat$.
These are all unitary and so the
associator and left and right unitor of the bicategory $\Mat(C)$ 
will have unitary components.
\end{proof}

\begin{definition}
A \emph{dagger symmetric monoidal bicategory} 
$(\Bicat, \boxplus, \Zero, \sigma^\boxplus, \nu^\boxplus, \dag)$
is a dagger bicategory $(\Bicat,\alpha,\lambda,\rho,\dag)$
and a symmetric monoidal bicategory 
$(\Bicat, \boxplus,\Zero, \sigma^\boxplus, \nu^\boxplus)$
such that
$(f\boxplus g)^\dag = f^\dag\boxplus g^\dag$, for 2-cells $f,g\in\Bicat_2$
and the components of the syllepsis $\nu^\boxplus$ are unitary.
\end{definition}

\begin{theorem}\label{symmetric-monoidal-dagger}
Given a dagger bimonoidal category $\Cat$, 
$(\Mat(\Cat), \boxplus, \Zero, \sigma^\boxplus, \nu^\boxplus, \dag)$
is a dagger symmetric monoidal bicategory.
\end{theorem}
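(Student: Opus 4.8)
The plan is to obtain this theorem by assembling the two preceding results and then discharging the two extra compatibility conditions in the definition. By Theorem~\ref{symmetric-monoidal}, $\MatC^\boxplus = (\Mat(\Cat),\boxplus,\Zero,\sigma^\boxplus,\nu^\boxplus)$ is a symmetric monoidal bicategory, and by Theorem~\ref{dagger-bicategory}, $(\Mat(\Cat),\alpha,\lambda,\rho,\dag)$, with $\dag$ applied componentwise, is a dagger bicategory. These share the same underlying bicategory and the same dagger, so to get a \emph{dagger symmetric monoidal bicategory} it remains only to check that $\dag$ is monoidal with respect to $\boxplus$ at the level of 2-cells, i.e.\ $(f\boxplus g)^\dag = f^\dag\boxplus g^\dag$, and that the syllepsis $\nu^\boxplus$ has unitary components.

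For the first condition I would argue directly from the block-diagonal description of $\boxplus$ on 2-cells. Given 2-cells $f:A\to B$ and $g:C\to D$ in $\Mat(\Cat)$, the 2-cell $f\boxplus g:A\boxplus C\to B\boxplus D$ has $f$ and $g$ on the diagonal blocks and identity 2-cells on the zero matrices in the off-diagonal blocks, whose entries are identity morphisms $1_O$ in $\Cat$. Since $\dag$ is applied entrywise, it commutes with forming block-diagonal matrices; and $1_O^\dag = 1_O$ because $\dag$ is the identity on objects and a functor. Hence $(f\boxplus g)^\dag$ and $f^\dag\boxplus g^\dag$ agree block-by-block, and therefore as 2-cells.

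For the second condition, recall from Section~\ref{matc-symmetric} that each component $\nu^\boxplus_{m,n}$ of the syllepsis is a matrix of morphisms of $\Cat$ built by composing the additive unitors and nullitors of $\Cat$. In a dagger bimonoidal category the additive symmetric monoidal structure is dagger symmetric monoidal, so $\lambda^\oplus$, $\rho^\oplus$, $\sigma^\oplus$ and the left and right nullitors all have unitary components; unitaries are closed under composition, and a matrix of unitary morphisms is a unitary 2-cell because the dagger on $\Mat_{m,n}(\Cat)$ is componentwise. Hence every entry of $\nu^\boxplus_{m,n}$ is unitary, so $\nu^\boxplus_{m,n}$ is unitary. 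The same observation covers the 2-cells witnessing the naturality of $\sigma^\boxplus$ (their entries are again composites of nullitors and unitors), so nothing extra is needed there.

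The obstacle here is bookkeeping rather than mathematics: one has to pin down the explicit presentation of $\nu^\boxplus$ — and, if one wants the stronger reading in which \emph{all} the structure 2-cells of the symmetric monoidal bicategory of Theorem~\ref{symmetric-monoidal} are required unitary, also the associator, the interchangers and the higher coherence cells — as composites of the named bimonoidal isomorphisms of $\Cat$, and then read off unitarity from the dagger-bimonoidal axioms. Because the $\boxplus$-monoidal structure is strict on 0-cells and 1-cells, the bulk of these cells are identities, so the verification collapses to the two points above.
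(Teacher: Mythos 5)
Your proof is correct and follows essentially the same route as the paper's: the paper likewise observes that the componentwise dagger commutes with the block-diagonal $\boxplus$ (giving $(f\boxplus g)^\dag = f^\dag\boxplus g^\dag$) and that the components of $\nu^\boxplus$ are composites of unitary morphisms of $\Cat$, with the underlying dagger bicategory and symmetric monoidal bicategory structures supplied by Theorems~\ref{dagger-bicategory} and~\ref{symmetric-monoidal}. Your version simply spells out these two checks in more detail than the paper does.
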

\begin{proof}
The dagger $\dag_{m,n}$ acts componentwise on $\Mat_{m,n}(\Cat)$
and this commutes with $\boxplus$.
The components of $\nu^\boxplus$ are compositions of unitary morphisms
of $\Cat$.
\end{proof}

{\bf Acknowledgements.}
I do wish to acknowledge some very useful discussions
with colleagues and friends; I believe it is appropriate
to do this in final versions of papers.

\bibliography{refs}{}

\newcommand{\etalchar}[1]{$^{#1}$}
\begin{thebibliography}{CKWW08}

\bibitem[AC04]{Abramsky2004}
Samson Abramsky and Bob Coecke.
\newblock A categorical semantics of quantum protocols.
\newblock In {\em Proceedings of the 19th Annual IEEE Symposium on Logic in
  Computer Science, 2004.}, pages 415--425. IEEE, 2004.

\bibitem[Bae97]{Baez1997}
John~C Baez.
\newblock {Higher-dimensional algebra II. 2-Hilbert spaces}.
\newblock {\em Advances in Mathematics}, 127(2):125--189, 1997.

\bibitem[BD98]{Baez1998}
John~C Baez and James Dolan.
\newblock {Higher-Dimensional Algebra III: n-Categories and the Algebra of
  Opetopes}.
\newblock {\em Advances in Mathematics}, 135(2):145--206, 1998.

\bibitem[BDGS22]{Bonchi2022}
Filippo Bonchi, Alessandro Di~Giorgio, and Alessio Santamaria.
\newblock Tape diagrams for rig categories with finite biproducts, 2022.

\bibitem[Bel64]{Bell1964}
{On the Einstein Podolsky Rosen paradox}.
\newblock {\em Physics Physique Fizika}, 1(3):195, 1964.

\bibitem[CDH20]{Comfort2020}
Cole Comfort, Antonin Delpeuch, and Jules Hedges.
\newblock Sheet diagrams for bimonoidal categories.
\newblock {\em arXiv preprint arXiv:2010.13361}, 2020.

\bibitem[CKWW08]{Carboni2008}
A~Carboni, GM~Kelly, RFC Walters, and RJ~Wood.
\newblock {Cartesian Bicategories II}.
\newblock {\em Theory and Applications of Categories}, 19(6):93--124, 2008.

\bibitem[CP22]{Campbell2022}
Jonathan~A Campbell and Kate Ponto.
\newblock Riemann-roch theorems in monoidal 2-categories.
\newblock {\em arXiv preprint arXiv:2203.04351}, 2022.

\bibitem[DFC22]{deFelice2022}
Giovanni De~Felice and Bob Coecke.
\newblock Quantum linear optics via string diagrams.
\newblock {\em arXiv preprint arXiv:2204.12985}, 2022.

\bibitem[dFSP{\etalchar{+}}23]{deFelice2023}
Giovanni de~Felice, Razin~A Shaikh, Boldizs{\'a}r Po{\'o}r, Lia Yeh, Quanlong
  Wang, and Bob Coecke.
\newblock Light-matter interaction in the zxw calculus.
\newblock {\em arXiv preprint arXiv:2306.02114}, 2023.

\bibitem[FLS65]{Feynman1965}
Richard~P Feynman, RB~Leighton, and M~Sands.
\newblock Lectures on {P}hysics, volume {III}.
\newblock {\em Quantum mechanics}, 1965.

\bibitem[GO13]{Gurski2013}
Nick Gurski and Angélica~M. Osorno.
\newblock Infinite loop spaces, and coherence for symmetric monoidal
  bicategories.
\newblock {\em Advances in Mathematics}, 246:1--32, 2013.

\bibitem[HV19]{Heunen2019}
Chris Heunen and Jamie Vicary.
\newblock {\em {Categories for Quantum Theory: An Introduction}}.
\newblock Oxford University Press, USA, 2019.

\bibitem[JS91]{Joyal1991}
Andr{\'e} Joyal and Ross Street.
\newblock The geometry of tensor calculus, i.
\newblock {\em Advances in mathematics}, 88(1):55--112, 1991.

\bibitem[JY21]{Johnson2021}
Niles Johnson and Donald Yau.
\newblock {Bimonoidal Categories, $ E\_n $-Monoidal Categories, and Algebraic $
  K $-Theory}.
\newblock {\em arXiv preprint arXiv:2107.10526}, 2021.

\bibitem[Kar19]{Karvonen2019}
Martti Karvonen.
\newblock The way of the dagger.
\newblock {\em arXiv preprint arXiv:1904.10805}, 2019.

\bibitem[Kel74]{Kelly1974}
G~Maxwell Kelly.
\newblock Coherence theorems for lax algebras and for distributive laws.
\newblock In {\em Category seminar}, pages 281--375. Springer, 1974.

\bibitem[KV94]{Kapranov1994}
Mikhail~M Kapranov and Vladimir~A Voevodsky.
\newblock 2-categories and zamolodchikov tetrahedra equations.
\newblock In {\em Proc. Symp. Pure Math}, volume~56, pages 177--260, 1994.

\bibitem[Lap72]{Laplaza1971}
Miguel~L Laplaza.
\newblock Coherence for distributivity.
\newblock In {\em Coherence in categories}, pages 29--65. Springer, 1972.

\bibitem[Lau06]{Lauda2006}
Aaron~D Lauda.
\newblock Frobenius algebras and ambidextrous adjunctions.
\newblock {\em Theory and Applications of Categories}, 16(4):84--122, 2006.

\bibitem[PS13]{Ponto2013}
Kate Ponto and Michael Shulman.
\newblock Shadows and traces in bicategories.
\newblock {\em Journal of Homotopy and Related Structures}, 8(2):151--200,
  2013.

\bibitem[RV19]{Reutter2019biunitary}
D~Reutter and J~Vicary.
\newblock Biunitary constructions in quantum information.
\newblock {\em Higher Structures}, 3(1):109--154, 2019.

\bibitem[Sel10]{Selinger2010}
Peter Selinger.
\newblock A survey of graphical languages for monoidal categories.
\newblock In {\em New structures for physics}, pages 289--355. Springer, 2010.

\bibitem[Sob]{Sobocinski2013}
Pawel Sobocinski.
\newblock Graphical linear algebra.
\newblock {\em https://graphicallinearalgebra.net}.

\bibitem[Vic12a]{Vicary2012-1}
Jamie Vicary.
\newblock Higher quantum theory.
\newblock {\em arXiv preprint arXiv:1207.4563}, 2012.

\bibitem[Vic12b]{Vicary2012}
Jamie Vicary.
\newblock Higher semantics of quantum protocols.
\newblock In {\em 2012 27th Annual IEEE Symposium on Logic in Computer
  Science}, pages 606--615. IEEE, 2012.

\end{thebibliography}
\bibliographystyle{alpha}

\end{document}